\documentclass[sn-mathphys-num]{sn-jnl}

\usepackage{graphicx}%
\usepackage{amsmath,amssymb,amsfonts}%
\usepackage[scr=rsfs]{mathalpha}
\usepackage[title]{appendix}%
\usepackage{xurl}%
\usepackage{textcomp}%
\usepackage{listings}%
\usepackage{tabularx}
\usepackage{subcaption}
\usepackage{diagbox}
\usepackage{longtable}
\usepackage{tcolorbox}
\usepackage[noabbrev,nameinlink]{cleveref}
\usepackage{bookmark}

\theoremstyle{thmstyleone}

\theoremstyle{thmstyletwo}
\newtheorem{remark}{Remark}
\newtheorem{example}{Example}
\newtheorem{claim}{Claim}

\theoremstyle{thmstylethree}
\newtheorem{axiom}{Axiom}
\newtheorem{definition}{Definition}
\newtheorem{theorem}{Theorem}
\newtheorem{lemma}{Lemma}

\Crefname{axiom}{axiom}{axioms}
\Crefname{axiom}{Axiom}{Axioms}
\Crefname{equation}{function}{functions}
\Crefname{equation}{Function}{Functions}

\newenvironment{longfigure}[2][]{%
  \expandafter\def\csname @captype\endcsname{figure}%
  \setcounter{subfigure}{0}%
  \setlength{\LTleft}{0pt}%
  #1%
  \begin{longtable}{#2}
}{\end{longtable}}

\newcommand{\defprob}[3]{
\begin{tcolorbox}[colback=gray!5!white,colframe=gray!75!black]
  \begin{minipage}{0.96\textwidth}
  \begin{tabular*}{\textwidth}{@{\extracolsep{\fill}}lr} #1   \\ \end{tabular*}
  {\bf{Input:}} #2  \\
  {\bf{Question:}} #3
  \end{minipage}
  \end{tcolorbox}
}

\newcommand{\centered}[1]{
    \begin{tabular}{l}
        \parbox{2cm}{\centering #1}
    \end{tabular}
}

\begin{document}

\title[Query-Based Committee Selection]{Query-Based Committee Selection}


\author[1]{\fnm{Itay Asher} \sur{Zimet} \orcid{https://orcid.org/0009-0008-5899-2846}}
\email{itay.zimet@gmail.com}

\author[3]{\fnm{Shiri} \sur{Alouf-Heffetz} \orcid{https://orcid.org/0000-0002-5102-2467}}
\email{shirihe@post.bgu.ac.il} 

\author*[2,3]{\fnm{Nimrod} \sur{Talmon} \orcid{https://orcid.org/0000-0001-7916-0979}}
\email{talmonn@bgu.ac.il}

\affil[1]{\orgdiv{The Future Scientists Center–Alpha Program}, \orgname{Davidson Institute, the Educational Arm of Weizmann Institute of Science}, \orgaddress{\city{Rehovot}, \country{Israel}}}

\affil[2]{\orgdiv{Department of Computer Science}, \orgname{Weizmann Institute of Science}, \orgaddress{\city{Rehovot}, \country{Israel}}}

\affil[3]{\orgdiv{Department of Industrial Engineering and Management }, \orgname{Ben-Gurion University of the Negev}, \orgaddress{\city{Beersheba}, \country{Israel}}}




\abstract{
\textbf{Purpose:} Multiwinner voting rules typically require full knowledge of voter preferences, which becomes impractical in large-scale or attention-limited settings. This paper investigates how accurately a winning committee can be approximated when voter preferences are elicited using a limited budget of structured queries.

\textbf{Methods:} We introduce a query-based framework for multiwinner elections in which voter preferences are elicited through refinement queries over subsets of candidates under a limited budget. We analyse several cost functions that model the cognitive effort needed to answer such queries, propose axiomatic properties for evaluating them, and experimentally evaluate simple query-based committee selection rules across multiple election models.

\textbf{Results:} Experimental results show that strategies based on recursively splitting candidate sets provide the best trade-off between elicitation cost and committee accuracy. Across several statistical models, these strategies approximate the outcome of k-Borda elections significantly more efficiently than alternative query types.

\textbf{Conclusion:} The results demonstrate that well-designed query strategies can substantially reduce the amount of preference information required while still producing high-quality committee outcomes, suggesting that query-based elicitation is a promising approach for scalable multiwinner decision-making.
}

\keywords{Computational social choice, Multiwinner elections, Incomplete information, experimental voting}



\maketitle

\section{Introduction}

The process of selecting a winning committee from a group of alternatives plays a crucial role in many societal, organizational, and political systems where \textit{group decision-making} is required~\cite{faliszewski2017multiwinner}. Multiwinner voting methods, such as positional scoring or Condorcet methods, are frequently used to determine the composition of such committees. However, for these methods to produce committees with certain fairness properties, they typically require full knowledge of voter preferences, which becomes impractical, especially in large-scale scenarios with numerous voters and candidates.
This limitation is particularly acute in decentralised governance systems such as those used in blockchain protocols. For example, in Cardano's Voltaire governance framework~\cite{cardanoVoltaire}, a large population is expected to participate in treasury and protocol decisions, often with little capacity to engage deeply. In such settings, voters may only have limited attention or willingness to express detailed preferences, yet high-stakes decisions must still be made. Query-based elicitation may reduce the cognitive burden on participants while preserving the representativeness and quality of outcomes.

In response to this challenge, we operate under a paradigm that assumes voters can only provide partial information due to limited attention. This approach builds on existing work in attention-aware social choice~\cite{alouf2022should,armstrong2024optimizing,talmon2023social,halpern2023representation,lang2020collective,imber2022approval,lu2020preference}.
In particular, we consider a model of committee selection based on a query-based framework (expanding on a similar framework for approval elections~\cite{halpern2023representation}). In our model, a \textit{fixed budget} is used to collect partial preferences from voters, where each query incurs a cost representing the effort required to answer it.
A central modelling challenge is therefore how to determine the cost each query entails: different representations of cognitive burden may lead to fundamentally different querying strategies.

The key problem then has two main aspects: how to quantify the cost of queries in a principled way, and how to elicit just enough information regarding a voter's preferences to determine a representative committee without incurring excessive costs. By framing committee selection as an optimisation problem of budget-constrained preference elicitation, we explore strategies for balancing the quality of the selected committee against the efficiency of the elicitation process.
We implement and experimentally evaluate different query-based rules under several electoral conditions, providing insights into the trade-offs between elicitation cost and committee quality. Our experiments, which include visualisations of election models using statistical frameworks, illustrate the effectiveness of the proposed methods in capturing voter preferences efficiently while achieving desirable committee outcomes.

\section{Related Work}

To contextualise our contributions, we survey the three bodies of literature most relevant to this work: multiwinner voting rules, partial preference elicitation, and query-based approaches to social choice.

\paragraph{Multiwinner Voting.}
The field of \textit{computational social choice} studies models of voting systems in which agents provide their preferences regarding various candidates. Prior research has examined election rules such as positional scoring, Condorcet, and approval voting that are used to determine the winners of an election~\cite{brandt2016handbook,brams1978approval}. In addition, multiwinner elections---in which a committee of a predetermined size is to be elected---have been studied extensively by both extending single-winner rules and proposing new election rules such as STV, Chamberlin-Courant, and Monroe~\cite{faliszewski2017multiwinner,elkind2017properties}.

\paragraph{Preference Elicitation Under Incomplete Information.}
These traditional methods often assume full access to voter preferences, which becomes infeasible in large-scale elections due to associated time and resource constraints~\cite{conitzer2007eliciting,distortion-Rosenchein}. Therefore, researchers have investigated partial preference elicitation methods, where only subsets of voter preferences are queried to infer broader preferences~\cite{ayadi2022approximating,uncertainty,bentert2020comparing,halpern2023representation}. These methods aim to collect just enough information to effectively determine the winning candidates, balancing the efficiency of data collection and the accuracy of the results.

\paragraph{Query-Based Elicitation.}
While querying voters for information was studied in the approval election setting~\cite{halpern2023representation}, ordinal preferences present unique challenges: voter preferences are richer and more descriptive, which makes them more challenging to elicit fully. Furthermore, the cost of a query in the ordinal setting depends on the cognitive effort it imposes, which is harder to quantify than in the binary approval case. We address this gap by modelling queries in the ordinal setting and proposing an axiomatic framework for evaluating their cost. We then explore different querying strategies and use rich simulations to study the trade-off between elicitation cost and the precision of the resulting committee, using the \textit{map of elections} library (\textit{Mapof} in short)~\cite{mapof,mapof_compass} to distinguish the quality of different strategies across statistical cultures and election types~\cite{gelman2002mathematics,lewis2001estimating,mattei2011empirical}.

\section{Preliminaries}\label{section:preliminaries}

We first discuss multiwinner elections. Specifically, we describe the format of the queries we consider in this paper and the cost associated with asking those queries in different settings. Then, we formally describe the \textit{query selection problem}.
We use $[n]$ to denote $\{1, \ldots, n\}$.

\subsection{Multiwinner Elections}

We introduce the standard model of multiwinner elections for the ordinal setting~\cite{faliszewski2017multiwinner}. 
An election consists of the following:

\begin{itemize}

\item
A set of candidates $C = \{c_1, \ldots, c_m\}$.

\item
A set of agents $V = \{1, \ldots, n\}$ where agent $i$ has a \emph{secret preference} $v_i$, which is a linear order (i.e., a permutation) of $C$.

\item
Target committee size $k \in [m]$.

\end{itemize}

Given an election $E = (C, V, k)$, a \textit{multiwinner voting rule} $\mathcal{R}$ returns a committee $\mathcal{R}(E)$, which is a subset of $C$ of size exactly $k$.

\subsection{Refinement Queries}

In our setting, we do not assume that we have the \textit{secret preferences} of the voters; on the contrary, we assume that initially we know nothing about voter preferences, but that we can query voters to get more information.
Next, we describe the type of queries that we consider in this paper.

\begin{definition}
Given a set $C$ of candidates, a \emph{refinement query} is of the form $Q = (C', B)$ where:
\begin{itemize}
    \item $C' \subseteq C$
    \item $B = (b_1, \ldots, b_\ell)\in{[0,1]}^\ell$ for some $1\leq \ell\leq |C'|$ such that $\sum_{j \in [\ell]} b_j = 1$
\end{itemize}
Then, the output of asking agent $i$ the query $Q$ is a weak order $R\subseteq C'\times C'$ that satisfies:
\begin{enumerate}
    \item  The weak order $R$ is consistent with the strict order $R'$ imposed by $v_i$. i.e., for all $x,y\in C'$:
    \begin{equation*}
        x\succ_{R}y\Rightarrow x\succ_{R'}y
    \end{equation*} 
    Equivalently, $R \subseteq R'\cap(C'\times C')$.
    
    \item The indifference classes $C'_1\succ_{R}C'_2\succ_{R}\dots\succ_{R}C'_\ell$ of $R$ are an ordered partition of $C'$ i.e.,
    \begin{equation*}
     C'_j\neq \varnothing,\quad \bigcup_{j\in[\ell]} C'_j = C',\quad C'_j\cap C'_k=\varnothing \quad\textrm{for}\ j\neq k
    \end{equation*}
    
    \item The indifference classes respect the proportions imposed by $B$. That is,
    \begin{equation*}
        \left||C'_j|-b_j|C'|\right|<1 \quad\textrm{for all}\quad j\in [\ell]
    \end{equation*}
    Intuitively, each class $C'_j$ has size as close to $b_j|C'|$ as integer rounding permits.
      
\end{enumerate}

\end{definition}

\begin{remark}
Note that our queries are general and expressive; they subsume many common elicitation formats. For example, a \emph{top-$t$ truncated ballot} over $C' \subseteq C$ corresponds to the query $(C', (\frac{1}{|C'|}, \ldots, \frac{1}{|C'|}, \frac{|C'|-t}{|C'|}))$ with $t+1$ buckets, where each of the first $t$ buckets isolates a single candidate and the last bucket collects the remaining $|C'|-t$ candidates. Similarly, a simple \emph{approval-style subset query}~\cite{halpern2023representation}---asking a voter to partition $C'$ into approved and disapproved candidates---is the special case $(C', (\frac{t}{|C'|}, \frac{|C'|-t}{|C'|}))$ for some threshold~$t$.
\end{remark}

\begin{example}\label{ex:querynext}
    Consider a voter 
    \begin{equation*}
        v_1=(a\succ b\succ c\succ d)
    \end{equation*}
    and the query    
    \begin{equation*}
        Q = (\{a, b, c, d\}, (\frac{1}{4},\frac{3}{4}))
    \end{equation*}
    Because the query partitions the candidates into two groups of relative sizes $\frac{1}{4}$ and $\frac{3}{4}$ and $|C'|=4$, the voter must separate exactly one favourite candidate from the remaining three.
    Hence, the query effectively asks: Of all these candidates, which is your favorite?
    As such, the voter's response to the query is $(\{a\}\succ_1 \{b,c,d\})$.

\end{example}

\begin{example}\label{ex:queryhalf}
    Consider the same voter $v=(a\succ b\succ c\succ d)$ and the query
    \begin{equation*}
        Q=(\{a, b, c, d\}, (\frac{1}{2},\frac{1}{2}))
    \end{equation*}
    Since $|C'|=4$, the query requires two indifference classes of size $2$.
    Therefore, the query corresponds to asking the voter to partition the candidates into a preferred half and a non-preferred half. As such, the response from the voter would be $(\{a,b\}\succ \{c,d\})$.
\end{example}

\subsection{Cost Functions}

We are interested in algorithms that minimize the amount of attention they demand from voters. Since different queries naturally impose varying levels of effort or cognitive load—and since this load may depend on characteristics of the voters—we introduce the notion of \textit{cost functions} to formally capture and quantify this variability. As a result, the choice of how to model query cost can greatly influence which elicitation strategies are considered efficient.

\begin{definition}
    Given the space of all possible queries $\Omega$ a cost function is a mapping 
    \begin{equation*}
        \mathrm{cost}: \Omega\xrightarrow{}R^+
    \end{equation*}
    that takes a refinement query $Q=(C',B)\in \Omega$ and returns a non-negative numeric value $\mathrm{cost}(Q)$ that acts as the query's cost and is deducted from the budget.
\end{definition}

We treat the cost function as an essential parameter of our model. The analysis of cost functions, their properties, and a few concrete examples for them are discussed in \Cref{section:cost-modeling}.

\subsection{The Query Selection Problem}

In this paper, we address the query selection problem: informally, given a cost function that quantifies the difficulty for a voter to respond to a particular query, our objective is to design a process that includes the following components: (1) a cost-efficient \textbf{querying strategy} that selects which queries to ask; and (2) a \textbf{voting rule} applied to the resulting partial preference data. The goal is to approximate as closely as possible the outcome—specifically, the selected committee—that would have been obtained if the complete preference information were available.

Slightly more formally, we are looking for \textit{query-based committee selection rules} that receive an \emph{election} $E = (C, V, k)$, a cost function, and a budget $b$ and proceed in two phases:
\begin{itemize}

\item
\textbf{Phase 1}:
The algorithm asks queries of total cost at most $b$ --- for the given cost function.

\item
\textbf{Phase 2}:
The algorithm selects a committee of size $k$ from $C$, aiming to approximate $\mathcal{R}(E)$.

\end{itemize}

The problem can be phrased as follows (note that the multiwinner voting rule used on the results of the queries has to be able to handle incomplete preferences).
\defprob{\textsc{The query selection problem}}{An election $E=(C,V,k)$, a cost function $\mathrm{cost}$, a budget $0 \leq b \leq \infty$, and a multiwinner voting rule $\mathcal{R}$}{How to best approximate $\mathcal{R}(E)$ by using refinement queries of cost at most~$b$ and using some multiwinner voting rule on the results of the queries?}

\section{Choosing Cost Functions}\label{section:cost-modeling}

In the previous section, we treated the cost function as a parameter of the query selection problem. We now investigate how to model query cost in a principled manner and select an appropriate cost function for a given circumstance.

\subsection{Basic Cost Functions}

We first explore a few intuitive cost functions. In particular, we initially investigate the following cost functions: 

One natural cost function is defined based on the number of candidates involved in a given query. This mapping aligns directly with the notion of query cost in the approval voting setting. Under this approach, queries that involve a larger number of candidates incur a higher cost. This idea can be formalised using the following expression: \begin{equation} \label{func:candidates}
    \mathrm{cost}(Q) = |C'|\ 
\end{equation}

\begin{example}
    Consider the queries $Q_1=(\{a,b,c,d\}, (\frac{1}{4},\frac{3}{4}))$ and $Q_2=(\{a,b,c,d\}, (\frac{1}{2},\frac{1}{2}))$ from \Cref{ex:querynext,ex:queryhalf}, respectively. Under this cost function, their costs are $\mathrm{cost}(Q_1)=\mathrm{cost}(Q_2)=|\{a,b,c,d\}|=4$. However, this fails to reflect that $Q_1$ should be cognitively less demanding than $Q_2$, since answering $Q_2$ requires the voter to select two preferred alternatives, while answering $Q_1$ requires selecting only a single preferred alternative.
\end{example}

To address this issue, observe that out of a query $ Q=(C',B)$ only the first $|B|-1$ classes require explicit selection by the voter (as the last class collects the remaining candidates), and thus the size of the last class should not be included in the cost calculation. The following expression captures this intuition: 
\begin{equation} \label{func:last bucket}
    \mathrm{cost}(Q) = |C'|\cdot(1-b_{|B|})\
\end{equation}
\begin{example}
    Now, the cost of the queries from the previous example would be:
    $\mathrm{cost}(Q_1)=|C'|(1-b_{|B|})=4(1-\frac{3}{4})=4\cdot\frac{1}{4}=1$ and $\mathrm{cost}(Q_2)=4(1-\frac{1}{2})=4\cdot \frac{1}{2}=2$. Nevertheless, if we compare $\mathrm{cost}(Q_2)=2$ to $\mathrm{cost}(Q_3)=(\{a,b,c,d\}, (\frac{1}{4},\frac{1}{4},\frac{1}{2}))=2$, we still fail to capture the inherent hardness of splitting candidates into more classes.
\end{example}

To improve the previous formulation, we require that a query with more classes will have a higher cost because we are asking a more granular question. This is represented in the following.
\begin{equation} \label{func:bucket_monotonicity}
    \mathrm{cost}(Q) = {(1-b_{|B|})\cdot |C'|}\cdot (|B| - 1)\
\end{equation}
With that said, there might be circumstances in which asking a question with more classes will have the same cost. For example, if the voters are computerised agents, then the number of classes might be a less pronounced factor affecting the query's cost.

Additionally, we might want to ensure that a query with more evenly distributed class sizes will cost more as it requires finer distinctions. We formalise this notion and some more properties to help guide the choice of cost functions in the next subsection.

\subsection{Properties of Cost Functions}\label{subsection:axiomatization}
We wish to develop a set of tools to evaluate a given cost function. To this end, we define some desired properties of cost functions in order to formalise the intuitions from the previous subsection and ensure that the cost functions we use can plausibly capture the cognitive effort imposed by a query.

\paragraph{Empty Questions}
First, we require that asking a question that does not give us information should not reduce our budget, as we can just as well not ask it.
\begin{axiom}[\emph{Empty questions}]\label{ax:empty_questions}
    For each $Q$ where $|B| = 0, 1$ or $|C'| = 0, 1$, $\mathrm{cost}(Q) = 0$.
\end{axiom}
This axiom can be satisfied by defining the cost function in a piecewise manner. Therefore, in what follows, we assume that all cost functions satisfy it.

\paragraph{Prefix Monotonicity}
Additionally, we require that if a query is an extended version of another query, then it should cost more, as it is asking the voter for more information.
\begin{definition} \label{def:prefix}
    We say that a vector $u$ is a nontrivial prefix of another vector $v$ if and only if $|u|<|v|$ and for every $i\in [|u|]$, $u_i=v_i$.
\end{definition}
\begin{axiom}[\emph{Prefix monotonicity}]\label{ax:prefix_monotonicity}
    For each $Q=(C',B), |B| \ne 0$ and $Q'=(C'',B'), |B'| \ne 1$, if $|C'|\cdot B$ is a nontrivial prefix of $|C''|\cdot B'$, then $\mathrm{cost}(Q) < \mathrm{cost}(Q')$.
\end{axiom}
Note that here we use the notation $|C'|\cdot B$ to describe the vector obtained by scaling each component of $B$ by $|C'|$.

\paragraph{Multiple Monotonicity}
This additionally requires that by multiplying the number of candidates included in the query, the cost scales by at least the same factor. This axiom provides a lower bound on the cost of queries that elicit more information from the voter.
\begin{definition} \label{def:query_mult}
    We say that a query $Q' = (C'', B)$ is a multiple of the query $Q=(C',B)$ by the scalar $l\in\mathbb{N}$ if and only if $|C''| = l\cdot |C'|$.
\end{definition}

\begin{axiom}[\emph{Multiple monotonicity}]\label{ax:query_mult_mono}
    For each $Q, Q'$, and $l\in \mathbb{N}$ such that $Q'$ is the multiple of $Q$ by the scalar $l$, $\mathrm{cost}(Q') \geq l\cdot \mathrm{cost}(Q)$.
\end{axiom}

\paragraph{Variance Monotonicity}
Now we attempt to formalise that the more \textit{extreme} a question is, the less it should cost. Hence, a question asking for the best candidate from a set $C$ will cost less than a question asking to split $C$ into two equal groups where one of them is preferred.
\begin{example}
    Let $C=C'=\{a,b,c,d,e,f,g,h,i,j\}$; we require that a question with $B=(1/2,1/2)$ will have a higher cost than a question with $B=(1/10,9/10)$.
\end{example}
\begin{definition} 
\label{def:variance}
    We define a measure for variance in the bucket ratios as:
    \begin{equation*}
        \mathrm{Var}(B) = \begin{cases} 
      0 & |B| = 0,1 \\
      \frac{1}{|B|}\cdot \sum_{j\in [|B|]}{(b_j-\frac{1}{|B|})}^2&|B|>1 \\
   \end{cases}
\end{equation*}
\end{definition}

\begin{axiom}[\emph{Variance monotonicity}]\label{ax:variance_mono}
    For each $Q=(C', B)$ and $Q' = (C',B')$ such that $|B|=|B'|$, if $\mathrm{Var}(B) > \mathrm{Var}(B')$ then $\mathrm{cost}(Q) < \mathrm{cost}(Q')$.
\end{axiom}

\subsection{Evaluation of Different Cost Functions}

We now examine each of the properties detailed above and prove which functions satisfy them.
In particular, we consider 
  \Cref{func:candidates,func:last bucket,func:bucket_monotonicity}.

\begin{claim} All the functions outlined above satisfy \label{claim:prefix_mono}\nameref{ax:prefix_monotonicity}. \end{claim}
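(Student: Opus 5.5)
We unpack the hypothesis of \nameref{ax:prefix_monotonicity} into two numerical inequalities and then check each of \Cref{func:candidates,func:last bucket,func:bucket_monotonicity} separately. Write $Q=(C',B)$ with $B=(b_1,\dots,b_\ell)$ and $Q'=(C'',B')$ with $B'=(b'_1,\dots,b'_{\ell'})$. That $|C'|\cdot B$ is a nontrivial prefix of $|C''|\cdot B'$ means $\ell<\ell'$ and $|C'|b_i=|C''|b'_i$ for all $i\in[\ell]$. Since $\ell'\ge 2$ and $\ell'\le|C''|$, we have $|C''|\ge 2$, so $Q'$ is not an empty question in the sense of \nameref{ax:empty_questions}.

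\textbf{Positivity of proportions.} We first show that every proportion appearing in a valid query is strictly positive. Each indifference class is nonempty, so $|C''_j|\ge 1$. Condition~3 of the definition of refinement queries gives $\bigl||C''_j|-b'_j|C''|\bigr|<1$, hence $b'_j|C''|>0$ and so $b'_j>0$. The same argument gives $b_j>0$ for all $j\in[\ell]$.

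\textbf{Two inequalities.} Summing the prefix equalities and using $\sum_j b_j=1$ gives
\[
|C'|=\sum_{i\in[\ell]}|C'|b_i=|C''|\sum_{i\in[\ell]}b'_i .
\]
Because $\ell\le\ell'-1$, the right-hand side is at most $|C''|(1-b'_{\ell'})$. Since $b'_{\ell'}>0$, this yields
\[
|C'|\;\le\;|C''|(1-b'_{\ell'})\;<\;|C''|. \qquad (\ast)
\]
These inequalities $(\ast)$ are the core of the argument.

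\textbf{The three cost functions.}
\begin{itemize}
\item For \Cref{func:candidates}, the claim is $|C'|<|C''|$, which is immediate from $(\ast)$. If $Q$ falls under \nameref{ax:empty_questions}, its cost is $0<|C''|$, so the claim still holds.
\item For \Cref{func:last bucket}, first take the case $\ell=1$ (or $|C'|\le1$). Then $\mathrm{cost}(Q)=0$, while $\mathrm{cost}(Q')=|C''|(1-b'_{\ell'})\ge|C''|b'_1>0$. Otherwise, positivity of $b_\ell$ gives $\mathrm{cost}(Q)=|C'|(1-b_\ell)<|C'|$, and $(\ast)$ bounds this by $|C''|(1-b'_{\ell'})=\mathrm{cost}(Q')$.
\item For \Cref{func:bucket_monotonicity}, the trivial case is handled exactly as above, using $\ell'-1\ge1$. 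Otherwise, multiply the strict inequality $|C'|(1-b_\ell)<|C''|(1-b'_{\ell'})$ from the previous item by $\ell-1$, which is at most $\ell'-1$. All factors are nonnegative, so the product inequality stays strict.
\end{itemize}

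\textbf{Main obstacle.} The step needing the most care is the boundary and strictness bookkeeping, not the inequalities themselves. The argument must show that proportions are strictly positive, which follows from nonemptiness of the classes together with the rounding condition. It must also ensure that the piecewise zero cost imposed by \nameref{ax:empty_questions} on a degenerate $Q$ does not break strictness. Both points are handled by the positivity argument and the separate treatment of $\ell=1$ above.
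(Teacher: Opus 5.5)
Your proof is correct and follows essentially the paper's route: the first half of $(\ast)$, $|C'|\le|C''|(1-b'_{|B'|})$, is exactly the paper's bound $b'_{|B'|}\le 1-|C'|/|C''|$. Functions 2 and 3 then follow by the same chains, namely $\mathrm{cost}(Q')\ge|C'|>\mathrm{cost}(Q)$ followed by multiplying by $|B|-1\le|B'|-1$. You are more careful than the paper on three points it leaves implicit: you derive positivity of the proportions from the response conditions, you justify $|C'|<|C''|$ via the second half of $(\ast)$, and you handle the empty-question case $\ell=1$ separately.
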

\begin{proof}
We know that $|C'|\cdot B$ is a prefix of $|C''|\cdot B'$, therefore $|B|<|B'|$ and $|C'|<|C''|$. As such, \Cref{func:candidates} satisfies the axiom.

We now show that if $|C'|\cdot B$ is a prefix of $|C''|\cdot B'$ then $b'_{|B|}\leq 1-\frac{|C'|}{|C''|}$. Firstly, 
\begin{align*}
    |C''|\cdot b'_{|B'|}\leq \sum_{j=|B|+1}^{|B'|}|C''|\cdot b'_j= \sum_{j=1}^{|B'|}|C''|\cdot b'_j - \sum_{j=1}^{|B|}|C''|\cdot b'_j \\
    =\sum_{j=1}^{|B'|}|C''|\cdot b'_j - \sum_{j=1}^{|B|}|C'|\cdot b_j= |C''|\cdot\sum_{j=1}^{|B'|}b'_j - |C'|\cdot\sum_{j=1}^{|B|}b_j= |C''|-|C'|
\end{align*}
and because of that, $|C''|\cdot b'_{|B'|}\leq |C''|-|C'|$ and in conclusion: $b'_{|B'|}\leq 1-\frac{|C'|}{|C''|}$.
Lastly, we can conclude that in the case of \Cref{func:last bucket}: $\mathrm{cost}(Q')=|C''|\cdot (1-b'_{|B'|})\geq |C''|\cdot (1-(1-|\frac{|C'|}{|C''|}))=|C'|>|C'|\cdot (1-b_{|B|}) = \mathrm{cost}(Q)$.
Now we know that $|C''|\cdot (1-b'_{|B'|})> |C'|\cdot (1-b_{|B|})$. Combining that with the knowledge that $|B'|> |B|$ and thus $|B'|-1> |B|-1$ we can conclude that $|C''|\cdot (1-b'_{|B'|})\cdot(|B'|-1)\geq |C'|\cdot (1-b_{|B|})\cdot(|B|-1)$ and thus $\mathrm{cost}(Q') > \mathrm{cost}(Q)$ and the axiom is satisfied for \Cref{func:bucket_monotonicity}.
\end{proof}
\begin{claim} All the functions outlined above satisfy \nameref{ax:query_mult_mono}. \end{claim}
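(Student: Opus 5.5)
The plan is to check each of \Cref{func:candidates,func:last bucket,func:bucket_monotonicity} directly against \nameref{ax:query_mult_mono}. The key observation is that, by \Cref{def:query_mult}, a multiple $Q'=(C'',B)$ of $Q=(C',B)$ keeps the \emph{same} ratio vector $B$. Only the candidate set changes, with $|C''|=l\cdot|C'|$. So $|B|$ and every $b_j$, in particular the last entry $b_{|B|}$, are identical for $Q$ and $Q'$. Each of the three functions is $|C'|$ multiplied by a factor that depends only on $B$. I would therefore write every function in the form $\mathrm{cost}(Q)=|C'|\cdot g(B)$, with
\begin{equation*}
g(B)=1,\qquad g(B)=1-b_{|B|},\qquad g(B)=(1-b_{|B|})(|B|-1),
\end{equation*}
respectively. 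Then $\mathrm{cost}(Q')=l\cdot|C'|\cdot g(B)=l\cdot \mathrm{cost}(Q)$, so the required inequality $\mathrm{cost}(Q')\ge l\cdot\mathrm{cost}(Q)$ holds, in fact with equality. This is a one-line check per function once the factorisation is written down.

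The remaining work is the interaction with \nameref{ax:empty_questions}. We assume every cost function is redefined piecewise to be $0$ when $|B|\in\{0,1\}$ or $|C'|\in\{0,1\}$, so I would split into cases according to which branch of that definition applies to $Q$ and to $Q'$. If $|B|\le 1$, the two queries share $B$, so both have cost $0$ and the inequality $0\ge l\cdot 0$ is trivial. If $|C'|=0$, then $|C''|=0$ as well, and again both costs are $0$. If $|C'|\ge 2$ and $|B|\ge 2$, then $|C''|=l|C'|\ge 2$ for $l\ge 1$, so neither query is in the zero branch and the factorisation argument applies. The degenerate case $l=0$ needs no separate treatment: the right-hand side is $0$, and costs are non-negative.

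The one case that needs care, and the only real obstacle, is $|C'|=1$ with $l\ge 2$. Here $\mathrm{cost}(Q)=0$ by \nameref{ax:empty_questions}, so the right-hand side is $l\cdot 0=0$, and $\mathrm{cost}(Q')\ge 0$ holds trivially by non-negativity. In other words, zeroing the cost of small queries can only lower $\mathrm{cost}(Q)$, never $\mathrm{cost}(Q')$ relative to it, so the piecewise modification never breaks the inequality. I would also note in passing that $B$ must be a valid ratio vector for both $|C'|$ and $|C''|$ (with $\ell\le|C'|$), but this does not affect the cost computation. Assembling these cases shows that all three functions satisfy \nameref{ax:query_mult_mono}.
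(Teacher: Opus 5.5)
Your proof is correct and takes the same route as the paper. The paper's entire argument is that each of the three functions is $|C'|$ times a factor depending only on $B$, which a multiple leaves unchanged, so $\mathrm{cost}(Q')=l\cdot\mathrm{cost}(Q)$; your case check that the piecewise zeroing from \nameref{ax:empty_questions} cannot break the inequality is sound and just makes explicit a point the paper leaves implicit.
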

\begin{proof}
    All three functions mentioned above satisfy this property, as all of them scale with $|C'|$.
\end{proof}
\begin{claim} 
None of the functions outlined above satisfies \nameref{ax:variance_mono}.
\end{claim}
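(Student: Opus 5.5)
To prove the claim we need, for each of \Cref{func:candidates,func:last bucket,func:bucket_monotonicity}, two queries $Q=(C',B)$ and $Q'=(C',B')$ with $|B|=|B'|$ and $\mathrm{Var}(B)>\mathrm{Var}(B')$ but $\mathrm{cost}(Q)\geq\mathrm{cost}(Q')$. The plan is to find one pair, or a small family of pairs, that violates \nameref{ax:variance_mono} for all three functions at once.

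For \Cref{func:candidates} this is immediate. The cost $|C'|$ does not depend on $B$ at all, so any two bucket vectors of equal length and different variance give $\mathrm{cost}(Q)=\mathrm{cost}(Q')$. The axiom demands strict inequality, so it fails. A concrete witness is $C'=\{a,\dots,j\}$ with $B=(\frac{1}{10},\frac{9}{10})$ and $B'=(\frac12,\frac12)$, which is the example accompanying the axiom.

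For \Cref{func:last bucket,func:bucket_monotonicity}, note that with $|B|=|B'|$ fixed, \Cref{func:bucket_monotonicity} is just \Cref{func:last bucket} multiplied by the common positive factor $|B|-1$. Any violation for one function is therefore a violation for the other. The cost depends only on the last bucket, so we look for a high-variance vector whose last bucket is small. Take $|C'|=10$, $B=(\frac{9}{10},\frac{1}{10})$ and $B'=(\frac12,\frac12)$. Then $\mathrm{Var}(B)=\frac12\bigl((\frac{4}{10})^2+(\frac{4}{10})^2\bigr)=\frac{4}{25}>0=\mathrm{Var}(B')$. The costs under \Cref{func:last bucket} are $10\cdot\frac{9}{10}=9$ and $10\cdot\frac12=5$, so the higher-variance query costs more, contradicting the axiom. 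Multiplying by $|B|-1=1$ gives the same violation for \Cref{func:bucket_monotonicity}.

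I would also check that both bucket vectors are valid refinement queries, meaning the class sizes $9,1$ and $5,5$ meet the rounding condition exactly. This is the only point needing care, and it is routine. The main risk is that a reader might object that the example ignores order, since the axiom as stated compares vectors regardless of permutation. The counterexample is legitimate precisely because those functions are not symmetric in the buckets, and I would add a remark saying so. The pair $(\frac{1}{10},\frac{9}{10})$ versus $(\frac{9}{10},\frac{1}{10})$ makes the same point even more directly: equal variance, but costs $1$ and $9$.
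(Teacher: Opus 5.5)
Your proof is correct and takes the same route as the paper. The function $|C'|$ fails because it ignores $B$, so the strict inequality cannot hold, and the other two fail by an explicit counterexample with $|C'|=10$, using that for fixed $|B|$ their cost depends on $B$ only through the last entry. The paper uses the three-bucket pair $(0.5,0.3,0.2)$ versus $(0.4,0.35,0.25)$ and computes both costs directly ($8>7.5$ and $16>15$); your two-bucket pair $(\frac{9}{10},\frac{1}{10})$ versus $(\frac12,\frac12)$, plus the observation that $(1-b_{|B|})\cdot|C'|\cdot(|B|-1)$ is a fixed positive multiple of $|C'|\cdot(1-b_{|B|})$ once $|B|$ is fixed, is a slightly leaner instance of the same idea (and since both your vectors are non-increasing, the ordering objection you anticipate does not arise).
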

\begin{proof}
    For \Cref{func:candidates} we can see that $|C'|=|C'|$ and, as such, $\mathrm{cost}(Q)=|C'|=\mathrm{cost}(Q')$ so the function does not satisfy the property.

Let $B=[0.5,0.3,0.2]$ and $B'=[0.4,0.35,0.25]$. A calculation gives:
\begin{align*}
    \mathrm{Var}(B)&=\frac{7}{450}\\
    \mathrm{Var}(B')&=\frac{7}{1800}
\end{align*}
As such, $\mathrm{Var}(B)> \mathrm{Var}(B')$. Let $|C'|=10$, for \Cref{func:last bucket}: 
\begin{align*}
    \mathrm{cost}(Q)&=10\cdot (1-0.2)= 8\\
    \mathrm{cost}(Q')&=10\cdot (1-0.25)=7.5
\end{align*}
Therefore, $\mathrm{cost}(Q)>\mathrm{cost}(Q')$ and \nameref{ax:variance_mono} is not satisfied. 
For \Cref{func:bucket_monotonicity}:
\begin{align*}
    \mathrm{cost}(Q)&=16\\
    \mathrm{cost}(Q')&=15
\end{align*}
Therefore, $\mathrm{cost}(Q)>\mathrm{cost}(Q')$ and \nameref{ax:variance_mono} is not satisfied again.
\end{proof}

\subsection{An Additional Cost Function}

In the preceding section we explored a variety of cost functions and their associated properties; in this section, we propose an additional cost function, for which we show that it satisfies all axioms, including \nameref{ax:variance_mono}.

\begin{theorem} The function 
\begin{equation} \label{func:perfect}
    \mathrm{cost}(Q)=|C'|\cdot |B|\cdot (1-\mathrm{Var}(B))
\end{equation}
satisfies all the axioms mentioned above.
\end{theorem}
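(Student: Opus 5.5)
We verify the four axioms one at a time, taking the piecewise convention that makes \nameref{ax:empty_questions} hold by definition. Throughout we write $n=|B|$.

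The key auxiliary fact bounds the variance term. For $n>1$, expanding the square and using $\sum_j b_j=1$ gives
\begin{equation*}
\mathrm{Var}(B)=\frac{1}{n}\sum_{j\in[n]} b_j^2-\frac{1}{n^2}.
\end{equation*}
Since $0\le b_j\le 1$, we have $\sum_j b_j^2\le\sum_j b_j=1$, and therefore
\begin{equation*}
0\le \mathrm{Var}(B)\le \frac{1}{n}\Bigl(1-\frac{1}{n}\Bigr)<1.
\end{equation*}
So the factor $1-\mathrm{Var}(B)$ is strictly positive, and the cost is always non-negative.

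\textbf{Variance monotonicity.} Here $C'$ and $|B|=|B'|$ are fixed, so the cost is $|C'|\cdot n\cdot(1-\mathrm{Var}(\cdot))$ with a positive constant prefactor $|C'|\cdot n$. It is therefore strictly decreasing in the variance, and $\mathrm{Var}(B)>\mathrm{Var}(B')$ gives $\mathrm{cost}(Q)<\mathrm{cost}(Q')$ immediately. (If $|C'|\le 1$, both costs are $0$, so this case is degenerate.)

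\textbf{Multiple monotonicity.} A multiple $Q'$ of $Q$ by $l$ keeps the same $B$, so $|B|$ and $\mathrm{Var}(B)$ are unchanged while $|C''|=l|C'|$. Hence $\mathrm{cost}(Q')=l\cdot\mathrm{cost}(Q)$ exactly. The piecewise zero cases only make the inequality easier.

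\textbf{Prefix monotonicity.} This is the main step. As in the proof of \Cref{claim:prefix_mono}, if $|C'|\cdot B$ is a nontrivial prefix of $|C''|\cdot B'$, then $|C''|\ge|C'|$ and $n'=|B'|\ge n+1\ge 2$. The remaining buckets of $B'$ are nonempty, so $|C''|\ge 2$ and the formula applies to $Q'$.

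If $n=1$ or $|C'|\le 1$, then $\mathrm{cost}(Q)=0<\mathrm{cost}(Q')$, using the positivity above. Otherwise, dropping the variance term gives the upper bound $\mathrm{cost}(Q)\le |C'|\,n$. For $Q'$, the variance bound gives the lower bound
\begin{equation*}
\mathrm{cost}(Q')\ge |C''|\, n'\Bigl(1-\frac{n'-1}{n'^2}\Bigr)=|C''|\Bigl(n'-1+\frac{1}{n'}\Bigr).
\end{equation*}
The function $x\mapsto x-1+1/x$ is increasing for $x\ge1$. Since $n'\ge n+1$, this yields
\begin{equation*}
\mathrm{cost}(Q')\ge |C'|\Bigl(n+\frac{1}{n+1}\Bigr)>|C'|\,n\ge\mathrm{cost}(Q).
\end{equation*}

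The main obstacle is exactly this comparison. A crude constant bound such as $1-\mathrm{Var}\ge 3/4$ is insufficient for $n\ge3$. The argument goes through only because the variance bound $\frac{1}{n'}\bigl(1-\frac{1}{n'}\bigr)$ shrinks as $n'$ grows, so the extra bucket always outweighs the variance penalty.
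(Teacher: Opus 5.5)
Your proof is correct and follows the paper's route: the bound $\mathrm{Var}(B)\le \frac{1}{n}\bigl(1-\frac{1}{n}\bigr)$ drives the same chain $\mathrm{cost}(Q')\ge |C''|\bigl(n'-1+\frac{1}{n'}\bigr)>|C'|\,n\ge \mathrm{cost}(Q)$, with two differences: you get the bound elementarily from $\sum_j b_j^2\le\sum_j b_j=1$ rather than via Bhatia--Davis, and you treat the piecewise zero cases that the paper skips. One wording fix: in variance monotonicity the case $|C'|\le 1$ is not ``degenerate'' but impossible, since $\mathrm{Var}(B)>\mathrm{Var}(B')\ge 0$ forces $|B|\ge 2$ and hence $|C'|\ge |B|\ge 2$ (if it could occur, two zero costs would violate the strict inequality).
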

\begin{proof}

Let $Q=(C',B)$ and $Q'=(C'',B')$ be such that $|C'|\cdot B$ is a prefix of $|C''|\cdot B'$. It follows that $\mathrm{cost}(Q)=|C'|\cdot |B|\cdot (1-\mathrm{Var}(B))$. 
\begin{lemma} \label{lm:dVar}
    We now show that the defined variance function is equivalent to a standard variance function on a specific discrete random variable. Let $B^*$ be a discrete random variable that can take on any value in $B$ with probability proportional to the number of times it appears in $B$; first, we note that the average value of $B^*$ is $\frac{1}{|B|}$ because it has $|B|$ values and sums to 1. Now we conclude that $\mathrm{Var}(B^*)=\frac{1}{|B|}\cdot\sum_{i=1}^{|B|}{(b_i-\frac{1}{|B|})^2}=\mathrm{Var}(B)$.
\end{lemma}
\begin{lemma} \label{lm:VarBound}
    Let $B$ be a vector with positive entries that sum to 1. According to \Cref{lm:dVar} $\mathrm{Var}(B)$ can be dealt with as a variance on a random variable and as such according to the Bhatia--Davis inequality~\cite{bhatia}:
    \begin{equation*}
        \mathrm{Var}(B)\leq (MAX(B)-\frac{1}{|B|})\cdot (\frac{1}{|B|} - MIN(B))\leq (1-\frac{1}{|B|})\cdot\frac{1}{|B|}
    \end{equation*}
    Therefore: $1-\mathrm{Var}(B)\geq \frac{|B|^2-|B|+1}{|B|^2}$.
\end{lemma}

We now show that $\mathrm{cost}(Q') > \mathrm{cost}(Q)$. By \Cref{lm:VarBound}:
\begin{align*}
    \mathrm{cost}(Q') &= |C''|\cdot |B'|\cdot(1-\mathrm{Var}(B'))\\
    &\geq |C''|\cdot |B'|\cdot \frac{|B'|^2-|B'|+1}{|B'|^2} \tag{a}\label{ineq:varbound}\\
    &= |C''|\cdot \frac{|B'|^2-|B'|+1}{|B'|} \\
    & =|C''|\cdot|B'|-1+\frac{1}{|B'|}\\
    &\geq |C''|\cdot \left(|B|+\frac{1}{|B'|}\right) \tag{b}\label{ineq:Bprime}\\
    &\geq |C'|\cdot |B|+\frac{|C'|}{|B'|}\\
    &> |C'|\cdot |B|\\
    &\geq |C'|\cdot |B|\cdot (1-\mathrm{Var}(B)) = \mathrm{cost}(Q) \tag{c}\label{ineq:varpos}
\end{align*}
where~\eqref{ineq:varbound} follows from \Cref{lm:VarBound},~\eqref{ineq:Bprime} uses the fact that $|B'|\geq |B|+1$ (since $|C'|\cdot B$ is a nontrivial prefix of $|C''|\cdot B'$), and~\eqref{ineq:varpos} holds because $0 \leq \mathrm{Var}(B) \leq 1$.
Thus, $\mathrm{cost}(Q) < \mathrm{cost}(Q')$ and \nameref{ax:prefix_monotonicity} is satisfied.

Let $Q=(C',B), Q'=(C'',B)$ and $l \in \mathbb{N}$ be such that $Q'$ is the multiple of $Q$ by $l$. Then a simple computation gives:  \begin{equation*}\mathrm{cost}(Q')=l\cdot |C'|\cdot |B|\cdot(1-\mathrm{Var}(B))=l\cdot \mathrm{cost}(Q)\ ,\end{equation*} and \nameref{ax:query_mult_mono} is satisfied.

Let $Q=(C', B)$ and $Q' = (C',B')$ be such that $|B|=|B'|$ and $\mathrm{Var}(B) > \mathrm{Var}(B')$. It directly follows that $1-\mathrm{Var}(B)<1-\mathrm{Var}(B')$, which means that $\mathrm{cost}(Q)=|C'|\cdot |B|\cdot (1-\mathrm{Var}(B))< |C'|\cdot |B|\cdot (1-\mathrm{Var}(B'))=|C'|\cdot |B'|\cdot (1-\mathrm{Var}(B'))=\mathrm{cost}(Q')$ therefore, $\mathrm{cost}(Q)\leq \mathrm{cost}(Q')$ and \nameref{ax:variance_mono} is satisfied. 

\end{proof}

\subsection{A Computational Cost Function}
Up until now, we looked almost exclusively at cost functions and properties of them from the human-oriented perspective. Now, we present a grounded setting where the cost function follows naturally and rigorously.

\begin{example}
    Consider a Euclidean election between computerised agents, where each candidate and voter is represented by a point in d-dimensional space. The preferences of a voter are then decided by their distance to each candidate. Thus, the full preference order of a voter is the list of candidates sorted by their distance from the voter.
    
    Then, asking a query $Q=(C',B)$ is equivalent to computing all distances from the voter to the candidates in $|C'|$, selecting $|B|$ $k$th order statistics of the distances array and partitioning said array based on these statistics. This can be done recursively by partitioning the array around the middle statistic each time, amounting to $\log |B|$ recursive levels that cost $|C'|$ each, yielding an asymptotic runtime of $O(|C'|\log |B|)$.

    Thus, the most natural cost function for this use case is:
    \begin{equation}\label{func:computer}
        \mathrm{cost}(Q)=|C'|\log |B|
    \end{equation}
\end{example}

\begin{claim}
    \Cref{func:computer} satisfies \nameref{ax:empty_questions}, \nameref{ax:prefix_monotonicity} and \nameref{ax:query_mult_mono} but does not satisfy \nameref{ax:variance_mono}.
\end{claim}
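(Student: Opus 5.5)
We verify the four axioms in turn, following the style used for \Cref{func:candidates,func:last bucket,func:bucket_monotonicity}.

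\emph{Empty questions.} If $|B|=1$ then $\log|B|=0$, and if $|C'|=0$ the cost is $0$. The remaining cases, $|B|=0$ and $|C'|=1$, are not handled by the formula; $|C'|=1$ with $|B|=1$ already gives $0$. As stated after \nameref{ax:empty_questions}, we adopt the convention that the cost is defined piecewise to be $0$ in these cases, so the axiom holds by definition.

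\emph{Multiple monotonicity.} This is immediate, as in the earlier claim. If $Q'=(C'',B)$ is a multiple of $Q=(C',B)$ by $l$, then $\mathrm{cost}(Q')=l|C'|\log|B|=l\cdot\mathrm{cost}(Q)$.

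\emph{Prefix monotonicity.} I would reuse the first observation in the proof of \Cref{claim:prefix_mono}: if $|C'|\cdot B$ is a nontrivial prefix of $|C''|\cdot B'$, then $|B'|>|B|$ and $|C''|>|C'|$. The argument for $|C''|>|C'|$ uses that the extra entries of $B'$ have positive mass, since every class is nonempty. Since $\log$ is strictly increasing, $\log|B'|>\log|B|\ge 0$, with $|B|\ge 1$ because $|B|\neq 0$. Hence $|C''|\log|B'|>|C'|\log|B|$. If $\log|B|=0$, the inequality still holds strictly because $|B'|\ge 2$ gives a positive left-hand side. The only subtle point is this degenerate case $|B|=1$, and handling it explicitly settles the axiom.

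\emph{Variance monotonicity.} The cost depends only on $|C'|$ and $|B|$. Therefore any $Q=(C',B)$ and $Q'=(C',B')$ with $|B|=|B'|$ have equal cost, so strict inequality fails whenever $\mathrm{Var}(B)>\mathrm{Var}(B')$. To exhibit this concretely I would take $|C'|=10$, $B=(1/10,9/10)$ and $B'=(1/2,1/2)$. Then $\mathrm{Var}(B)=\tfrac12\big((0.4)^2+(0.4)^2\big)=0.16>0=\mathrm{Var}(B')$, while both costs equal $10\log 2$. This is a counterexample, so the axiom is not satisfied.
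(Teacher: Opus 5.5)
Your proof is correct and follows essentially the same route as the paper: empty questions via the piecewise convention, prefix monotonicity from $|B|<|B'|$, $|C'|<|C''|$ and monotonicity of $\log$, multiple monotonicity by linearity in $|C'|$, and an equal-$|B|$ counterexample for variance monotonicity. Your extra care (justifying $|C''|>|C'|$ via nonempty classes and treating $|B|=1$ explicitly) is a mild improvement, and your counterexample with $|C'|=10$, $B=(1/10,9/10)$ plays exactly the role of the paper's $|C'|=4$, $B=(\frac{1}{4},\frac{3}{4})$.
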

\begin{proof}
    First, we established previously that \nameref{ax:empty_questions} is regarded as part of a piecewise definition for all cost functions and thus is satisfied. 
    Now, as we noted in the proof of \Cref{claim:prefix_mono}, if $|C'|\cdot B$ is a prefix of $|C''|\cdot B'$, then $|B|<|B'|$ and $|C'|<|C''|$. Therefore:
    \begin{equation*}
        |C'|\log |B|<|C''|\log |B'|
    \end{equation*}
     and \nameref{ax:prefix_monotonicity} is satisfied.
    Additionally, \Cref{func:computer} scales with the factor $|C'|$ and thus satisfies \nameref{ax:query_mult_mono}.

    Finally, let
    \begin{align*}
        Q&=(C'=\{a,b,c,d\},B=(\frac{1}{4},\frac{3}{4}))\\ Q&=(C'=\{a,b,c,d\},B'=(\frac{1}{2},\frac{1}{2}))
    \end{align*}
    A calculation provides: $\mathrm{Var}(B)=\frac{1}{16}>\mathrm{Var}(B')=0$ but $|B|=|B'|=2$ and thus 
    \begin{equation*}
        \mathrm{cost}(Q)=|C'|\log |B|=|C'|\log |B'|=\mathrm{cost}(Q')
    \end{equation*}
    Therefore \nameref{ax:variance_mono} is violated.
\end{proof}
    
\begin{table}[t]
\centering
\renewcommand{\arraystretch}{1.5}
\setlength{\tabcolsep}{0.5\tabcolsep}
\caption{Different cost functions and their properties.}
\begin{tabular}{c|c|c|c}
\centered{\diagbox{Function}{Axiom}}
& \nameref{ax:prefix_monotonicity}
& \nameref{ax:query_mult_mono}
& \nameref{ax:variance_mono} \\
\midrule
$\lvert C' \rvert$
& YES & YES & NO \\
$\lvert C' \rvert \cdot (1 - b_{\lvert B \rvert})$
& YES & YES & NO \\
$(1 - b_{\lvert B \rvert}) \cdot \lvert C' \rvert \cdot (\lvert B \rvert - 1)$
& YES & YES & NO \\
$\lvert C' \rvert \cdot \lvert B \rvert \cdot (1 - \operatorname{\mathrm{Var}}(B))$
& YES & YES & YES \\
$\lvert C' \rvert\log \lvert B \rvert$
& YES & YES & NO \\
\end{tabular}
\end{table}

\section{Possible Query-Based Rules}\label{section:Possible-rules}

Each \emph{query-based rule} is made up of the two phases mentioned above, where the first phase is the data collection phase in which the rule strategically queries the voters, and the second is the data processing stage where the rule uses the data gathered to find a winning committee. 
We present and evaluate a few rudimentary and non-adaptive rules that try to approximate the output of a given positional scoring rule (like $k$-Borda). 
This allows us to begin exploring the space of all possible query-based rules.
We start with some basic observations.

\subsection{Phase 1 -- the Data Collection Phase}
 When looking at the data collection phase, we can argue that there is no apparent benefit to asking a voter about candidates that we have already determined are in different classes. Slightly more formally, if we have two classes $C_1 \subseteq C, C_2 \subseteq C$ such that $ C_1 \cap C_2 =\varnothing $ and we know the relation between them for the given voter, then there is no benefit to asking about $C_3$ if there are candidates $x\in C_1,\ y\in C_2$ such that $\{x,y\}\subseteq C_3$.
 
 This motivates a basic approach for the first phase that iteratively splits the voter preferences into classes and then continues by splitting those classes recursively, giving a partial order on the candidates.

 In such an approach, each rule is distinguished from the others by the queries asked and the way they are formed. Additionally, the rules are differentiated by the way that they use the budget and distribute it among the voters.
 
 \subsection{Phase 2 -- the Data Processing Phase}
After the querying phase, each voter's partial information can be represented as an ordered partition $C'_1 \succ C'_2 \succ \cdots \succ C'_\ell$ of the candidate set $C$, where candidates within the same class are indistinguishable. Given a positional scoring vector $\mathbf{s} = (s_1, \ldots, s_m)$ (such as the Borda vector $s_i = m - i$), we assign each candidate $c \in C'_j$ the score:
\begin{equation}\label{eq:partial-score}
    \text{score}(c) = \frac{1}{|C'_j|} \sum_{r = L_j}^{L_j + |C'_j| - 1} s_r
\end{equation}
where $L_j = 1 + \sum_{i=1}^{j-1} |C'_i|$ is the first position occupied by class $C'_j$. That is, each candidate in a class receives the average of the positional scores over all positions that the class spans.

\begin{example}
    Let $C = \{a,b,c,d\}$ with Borda scores $\mathbf{s} = (3,2,1,0)$, and suppose the querying phase yields the partial order $\{a,b\} \succ \{c,d\}$. Then candidates $a$ and $b$ each receive $\frac{1}{2}(s_1 + s_2) = \frac{1}{2}(3+2) = 2.5$, and candidates $c$ and $d$ each receive $\frac{1}{2}(s_3 + s_4) = \frac{1}{2}(1+0) = 0.5$. If no queries are asked (i.e., the partition is the trivial single class $\{a,b,c,d\}$), every candidate receives $\frac{1}{4}(3+2+1+0) = 1.5$.
\end{example}

The total score of a candidate is the sum of its scores across all voters, and the winning committee consists of the $k$ candidates with the highest total scores.\footnote{We note that extending this approach beyond positional scoring rules (e.g., to committee scoring rules such as Chamberlin--Courant) requires additional modelling choices, which we leave to future work.}

 \subsection{Algorithms}
 
We present several simple rules that stem from the approaches for the different phases that are detailed above, and explore them experimentally. To that end, we introduce several question types (intuitive classes of queries), and two intuitive strategies for distributing the budget among the voters.

We have the following question types:
\begin{itemize}
    \item \textit{\textbf{N}ext}: the best-ranked candidate out of a set of candidates $C'$.
    \item \textit{\textbf{L}ast}: the worst-ranked candidate out of a set $C'$.
    \item \textit{\textbf{N}ext and \textbf{L}ast}: the best-ranked and the worst-ranked candidate out of a set $C'$.
    \item \textit{\textbf{S}plit}: divide the set $C'$ into two approximately equally-sized sets $A_1$ and $A_2$ (even/odd) such that any candidate in $A_1$ is preferred over candidates in $A_2$ and $C'=A_1\cup A_2, A_1\cap A_2 = \varnothing$.
\end{itemize}

Additionally, we have the following two ways of spending the budget: 
\begin{itemize}
    \item \textbf{Eq}ually: most intuitively, we split the budget almost equally among the voters. We ask queries iteratively from one voter to the next one in a predefined order. 
    \item \textbf{F}irst-\textbf{c}ome, \textbf{f}irst-\textbf{s}erved: we first use all budget necessary to get the full order of the first voter, then, and only then we move on to the second voter.
\end{itemize}

\begin{table}[t]
     \centering
     \renewcommand{\arraystretch}{1.5}
     \caption{Definitions of the different strategies considered.}
     \begin{tabularx}{\textwidth}{c|>{\raggedright\arraybackslash}X|>{\raggedright\arraybackslash}X}
          \diagbox{question type}{budget distribution} & divides budget \textbf{EQ}ually & divides greedily (\textbf{FCFS}) \\\midrule
         \textit{\textbf{N}ext} ($B=(1/|C|,1-1/|C|)$)& \emph{next equally (N-EQ)} & \emph{next FCFS (N-FCFS)} \\
          \textit{\textbf{L}ast} ($B=(1-1/|C|,1/|C|)$)& \emph{last equally (L-EQ)} & \emph{last FCFS (L-FCFS)}\\
          \textit{\textbf{N}ext and \textbf{L}ast} ($B=(1/|C|,1-2/|C|,1/|C|)$)& \emph{next and last equally (NL-EQ)} & \emph{next and last FCFS (NL-FCFS)}\\
          \textit{\textbf{S}plit} ($B=(1/2,1/2)$)& \emph{split equally (S-EQ)} & \emph{split FCFS (S-FCFS)}\\
     \end{tabularx}
      \label{tab:defenitions}
 \end{table}

The question types and the budget strategies together make up the rudimentary strategies we explored in our simulations as illustrated in \Cref{tab:defenitions}. From here onward we use the abbreviations denoted in \Cref{tab:defenitions} when referring to the different strategies.

\begin{example}
    Let $E=(C,V,k)$ such that:
    \[
        C=\{1,2,3,4\}\qquad
        V=\{v_1=(1\succ2\succ3\succ4),v_2=(2\succ1\succ3\succ4)\}\qquad
        k=2
    \]
    And let the budget $b=24$. Based on S-EQ, we start by asking $v_1$ a query $Q=(C,(0.5,0.5))$. Notice that $\mathrm{cost}(Q)=|C|\cdot|(0.5,0.5)|\cdot(1-\mathrm{Var}((0.5,0.5)))=4\cdot 2\cdot 1=8$ and so the budget left after asking this question is $24-8=16$. From that question, we learn that $v_1$ prefers $\{1,2\}$ to $\{3,4\}$. Then, splitting the budget equally, we ask $v_2$ the same question and get the same answer back, and we are left with a budget $b=8$. Similarly, we ask both voters about their preferences regarding $\{1,2\}$, costing us $4$ for each question and depleting the budget. Then we treat it as if $v_1=1\succ2\succ\{3,4\}$ and $v_2=2\succ1\succ\{3,4\}$. Running Borda we get $\{1,2\}$ as our winning committee.
\end{example}

\section{Experiments}
We conducted three types of experiments.\footnote{Code for the experiments is available at: \url{https://github.com/itayzimet/VotingRulesExperiment}.}  
In each of our experiments we set a traditional multiwinner voting rule (such as $k$-Borda defined below) as the goal for the query-based rule and compared their outputs on different elections and with different budgets.

Each query-based rule was scored on a specific election based on the Hamming distance between the winning committee of the election as determined by the traditional rule and the winning committee produced by the query-based rule, i.e., the number of candidate swaps needed to get from one committee to the other. We use Hamming distance as it directly measures the number of ``incorrect'' committee members and is straightforward to interpret; note that for committees of fixed size $k$, the Hamming distance equals $2k(1 - J)$ where $J$ is the Jaccard similarity, so the choice between the two metrics is equivalent up to an affine transformation. In our experiments, the traditional rule used for comparison is k-Borda. For comparison, we also evaluated a voting rule that picks the winning committee randomly (shown as an additional baseline in \Cref{fig:map_by_random,fig:IC-results,fig:Euclidean-results}).

\begin{definition}[\textit{k}-Borda]
    In \textit{k}-Borda, each voter assigns a score to each candidate based on the total number of candidates $|C|$ and the candidate's position in the voter preference ranking. Specifically, the score assigned to the candidate ranked in position $i$ is given by:
    \[
    \text{score}(i) = |C| - i.
    \]
    The total Borda score of each candidate is obtained by summing their scores across all voters. The winning committee consists of the $k$ candidates with the highest total Borda scores.
\end{definition}

\begin{example}
    Consider an election with three candidates, $C = \{a, b, c\}$, and one voter whose preference ranking is $a \succ b \succ c$. The Borda scores assigned by this voter are computed as follows:
    \begin{itemize}
        \item Candidate $a$, ranked 1st: $\text{score}(1) = 3 - 1 = 2$
        \item Candidate $b$, ranked 2nd: $\text{score}(2) = 3 - 2 = 1$
        \item Candidate $c$, ranked 3rd: $\text{score}(3) = 3 - 3 = 0$
    \end{itemize}
    If $k = 2$, then the winning committee consists of candidates $a$ and $b$, as they have the highest Borda scores.
\end{example}

\subsection{Experimental Design}

We performed three experiments, as discussed below.

\subsubsection{Uniform and 2D Populations Experiments} \label{subsec-IC-experiments}
In the first two experiments, we considered two simple election structures:

\begin{definition}[Impartial culture]
     Under the Impartial Culture model (IC), every preference order appears with the  same probability. That is, to generate a vote we choose a preference order uniformly at random.
 \end{definition}

\begin{definition}[Euclidean culture]
    Under the Euclidean model, every candidate and voter is assigned a point in d-dimensional space randomly. Then the preference order of each voter is the list of candidates sorted by their distance to the voter, such that a closer candidate is more preferred. In our simulations the points were assigned uniformly from the two-dimensional square $[0,1]^2$.
\end{definition}
 
Specifically, we generated 5000 IC and 2D Euclidean elections with 100 candidates and 100 voters. Each election was given to every query-based rule and to a traditional multiwinner rule ($k$-Borda) with a committee size of 50. Then we scored the query-based rules on the different elections with different budget constraints. The scores were averaged over the different elections to produce \Cref{fig:IC-results,fig:Euclidean-results}. For the IC case we used \Cref{func:perfect} as the cost function and for the Euclidean case we used \Cref{func:computer}.
\subsubsection{Mapof Experiment}

In the third experiment, we used the Mapof framework~\cite{mapof,mapof_compass} to generate elections from different statistical models such as P{\'o}lya-Eggenberger Urn, Euclidean election, and Mallows. We then mapped the different elections in \Cref{fig:map_by_type} based on their similarities and embedded a ``compass'' for the mapping consisting of four elections that represent different extreme properties:

\begin{itemize}
     \item An identity election (ID) is an election where all voters agree on a single preference order, modelling perfect agreement.
     \item A uniformity election (UN) is an election where each voter takes each preference equally often, modelling perfect disagreement.
     \item A stratification election (ST) is an election where all the voters agree that half of the candidates are better than the other but do not agree on the order inside those halves, modelling partial agreement.
     \item An antagonism election (AN) is where half of the voters have preference orders exactly opposite the other half's preferences, modelling conflict.
 \end{itemize}

 We then generated a map (\Cref{fig:fig_table}) for each query-based rule where the colour of an election on the map is the sum of the scores for each budget value (each budget's score is averaged over five random voter orders) normalised by the highest score (the highest Hamming distance). For this experiment we used \Cref{func:perfect} as the cost function.

\section{Results and Discussion}

We now present and discuss the findings of the experiments that evaluated the query-based rules from \Cref{section:Possible-rules} on various voting models. We categorize the different query-based rules following the classification in \Cref{tab:defenitions}.

\subsection{Impact of the Query Choosing Strategy}
The best query-based rules for IC and two-dimensional Euclidean elections are those that split the candidates into two buckets as illustrated in \Cref{fig:IC-results,fig:Euclidean-results}. That is, their ratio of cost to distance from the true committee is almost always better than the other rules. It is also evident from \Cref{fig:IC-results,fig:Euclidean-results} that the differences between strategies can be substantial. For example, one strategy can achieve perfect accuracy with a budget of about 130,000 (S-EQ) while others need an order of magnitude greater budget to get perfect accuracy (NL-FCFS). Moreover, the best-performing rules in both cases are the same family of rules and their behaviour is almost identical.

\begin{figure}[t]
    \centering
    \begin{minipage}[c]{0.49\textwidth}
        \includegraphics[width=\textwidth]{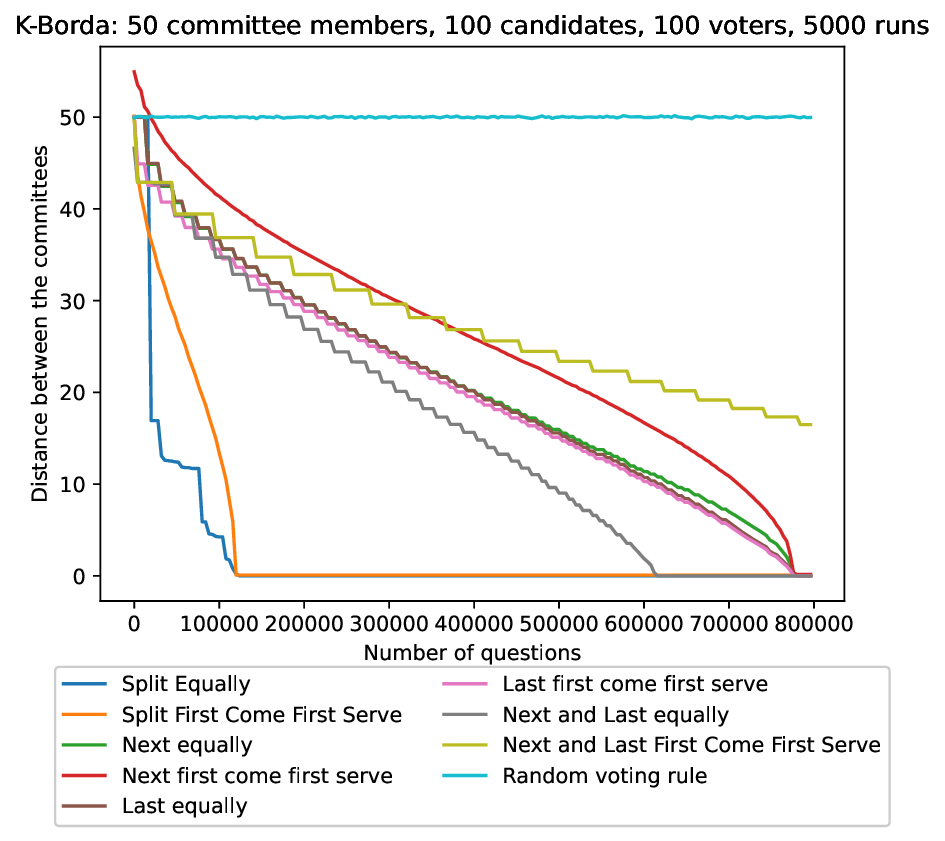}
    \caption{Committee distance from k-Borda winner under various voting rules and budget constraints (avg. over 5000 IC elections).}
    \label{fig:IC-results}
    \end{minipage}
    \begin{minipage}[c]{0.49\textwidth}
        \includegraphics[width=\textwidth]{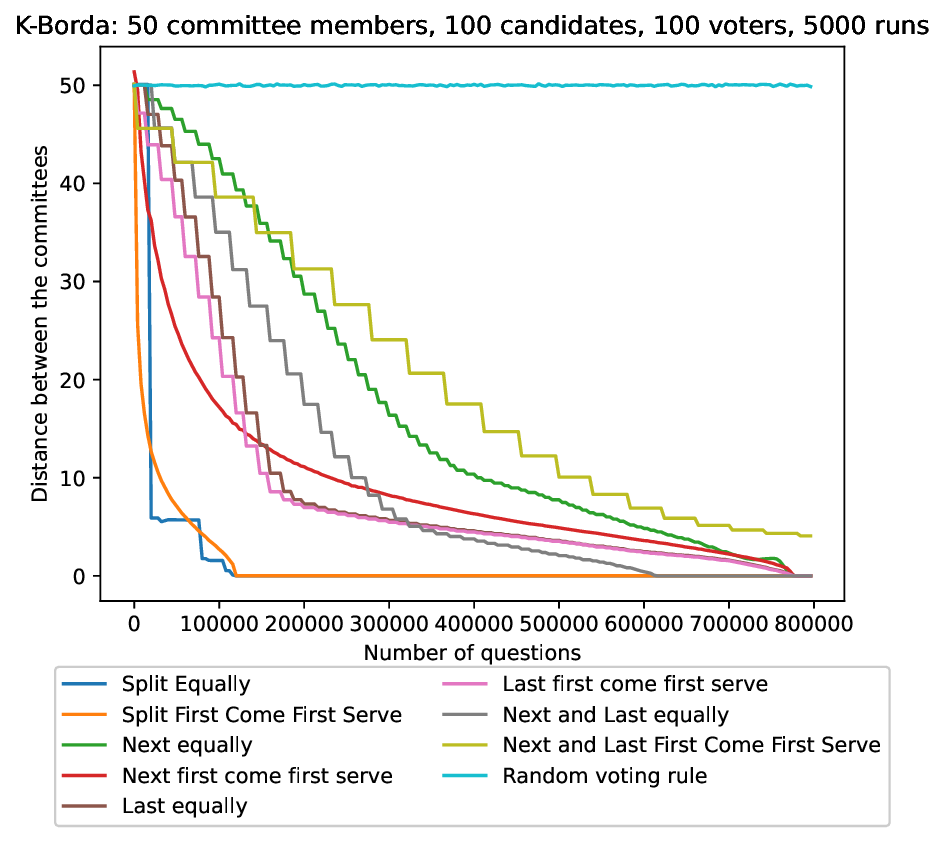}
    \caption{Committee distance from k-Borda winner under various voting rules and budget constraints (avg. over 5000 two-dimensional elections).}
    \label{fig:Euclidean-results}
    \end{minipage}
\end{figure}

Additionally, the conclusions from \Cref{fig:IC-results} are not limited only to IC elections. That is because if we look at the maps in \Cref{fig:fig_table} in relation to the culture map at \Cref{fig:map_by_type} we observe that, for most elections, S-FCFS and S-EQ (\Cref{fig:map_by_splitFCFS,fig:map_by_spliteq}) are the best strategies with respect to all the statistical models. As such, we conclude that with high probability taking the \textit{split strategy} (\Cref{fig:map_by_spliteq,fig:map_by_splitFCFS}) is the best choice (under \Cref{func:perfect}). Finally, we note that (\textit{next, last}) strategies almost always perform worse than the alternatives (under both \Cref{func:perfect,func:computer}). For reference, \Cref{fig:map_by_random} shows that the random baseline performs poorly uniformly across all election types, confirming that the improvements observed for the query-based strategies are meaningful rather than artifacts of the election structure.

\begin{figure}[t]
    \centering
    \begin{minipage}[c]{0.49\textwidth}
        \includegraphics[width=\textwidth]{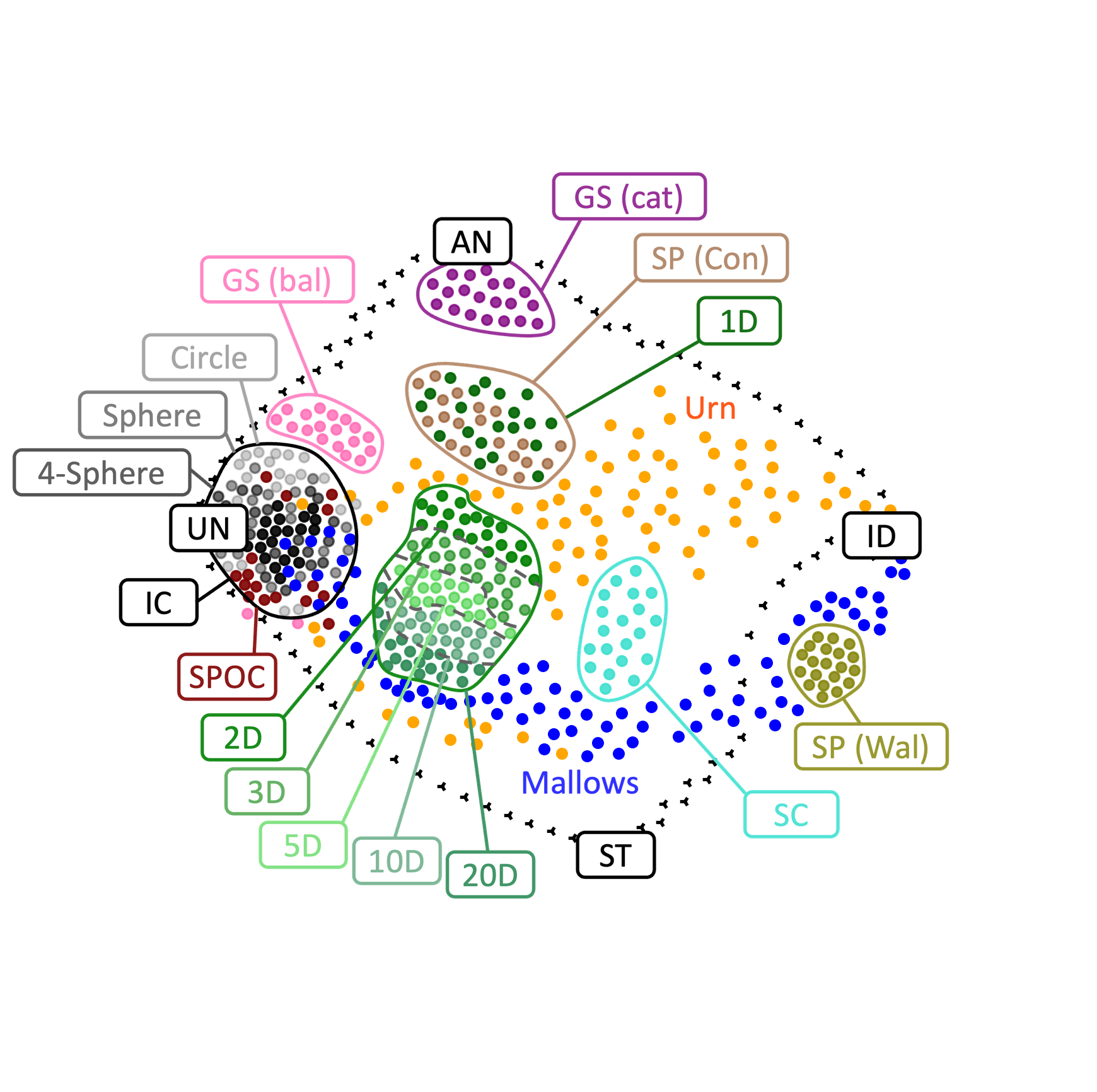}
    \caption{Map of elections coloured by election family.} \label{fig:map_by_type}
    \end{minipage}
    \begin{minipage}[c]{0.49\textwidth}
        \includegraphics[width=\textwidth]{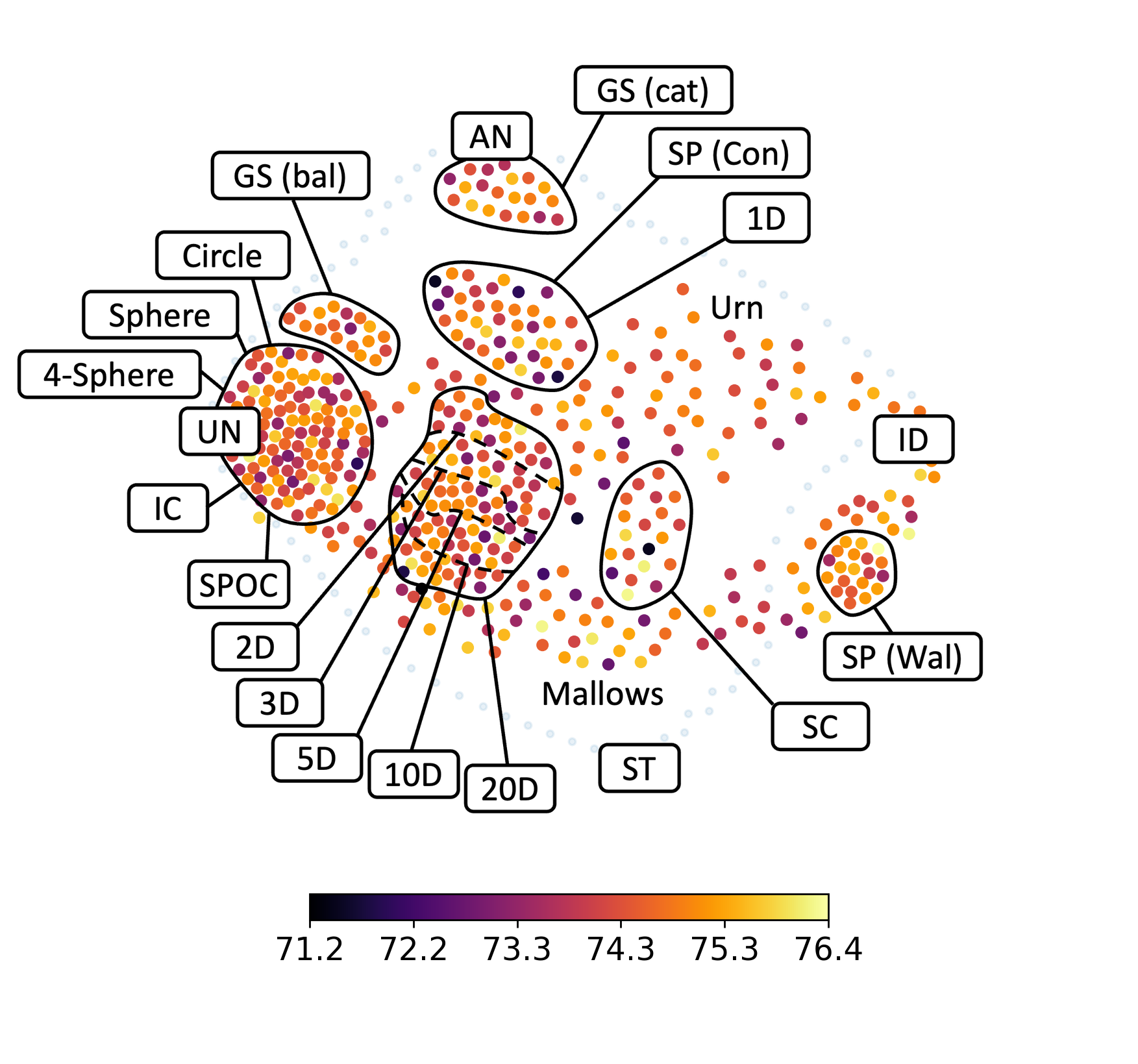}
    \caption{Map of elections coloured by performance of random.} \label{fig:map_by_random}
    \end{minipage}
\end{figure}
\begin{longfigure}{c|cc}
    \caption{Election maps coloured by difficulty for different query-based rules.}
    \label{fig:fig_table} \endfirsthead
    \caption*{Election maps coloured by difficulty for different query-based rules (continued).}
\endhead
  \addtocounter{figure}{-1}
\diagbox[width=0.28\textwidth]{question \\ type}{budget \\ distribution}& divides Equally (EQ) & divides greedily (FCFS)  \\\midrule
     \centered{split}&\begin{subfigure}{0.360\textwidth}
        \includegraphics[width=\textwidth]{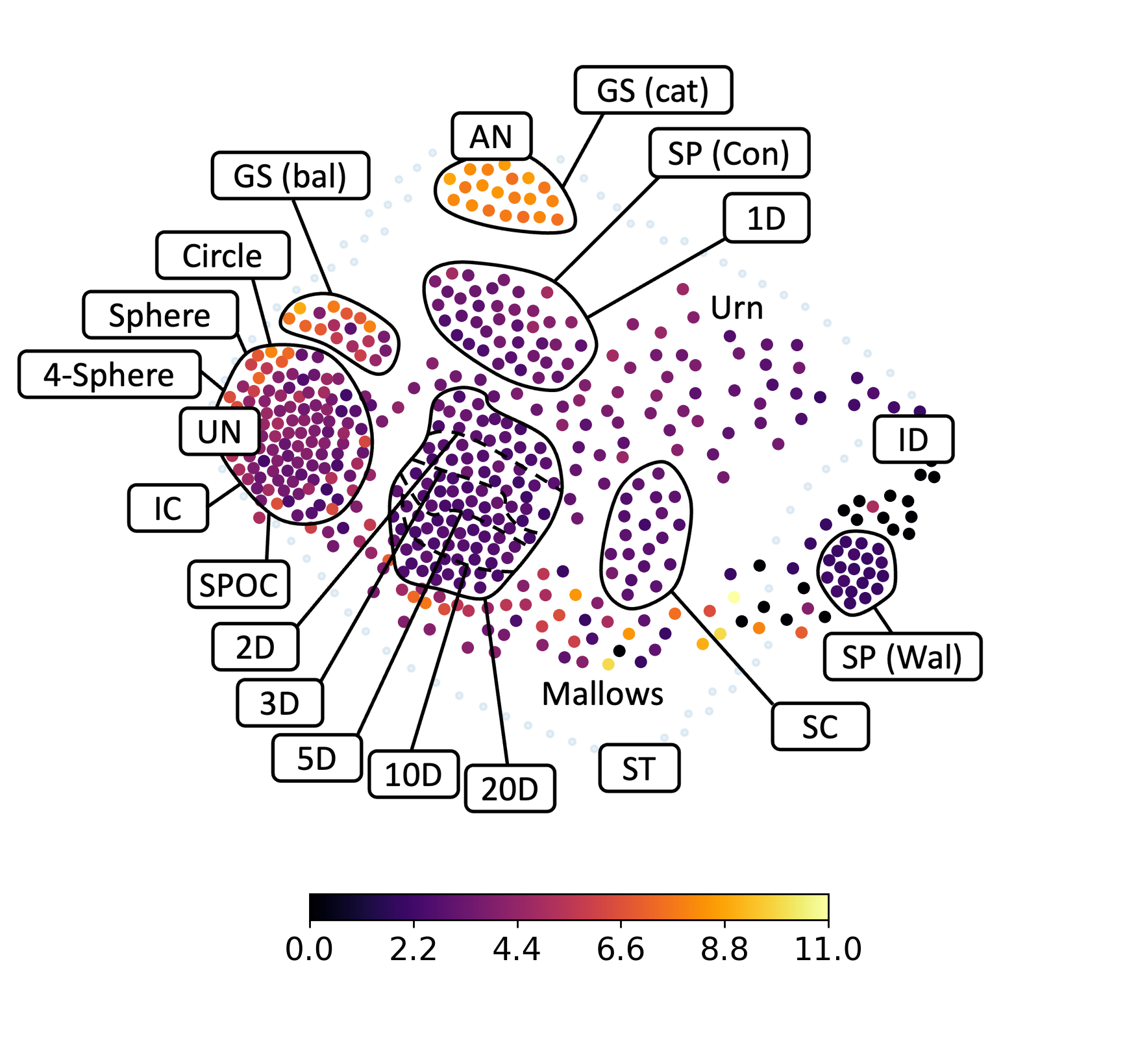}
        \caption{Map of elections coloured by difficulty for S-EQ} \label{fig:map_by_spliteq}
    \end{subfigure}&\begin{subfigure}{0.360\textwidth}
        \includegraphics[width=\textwidth]{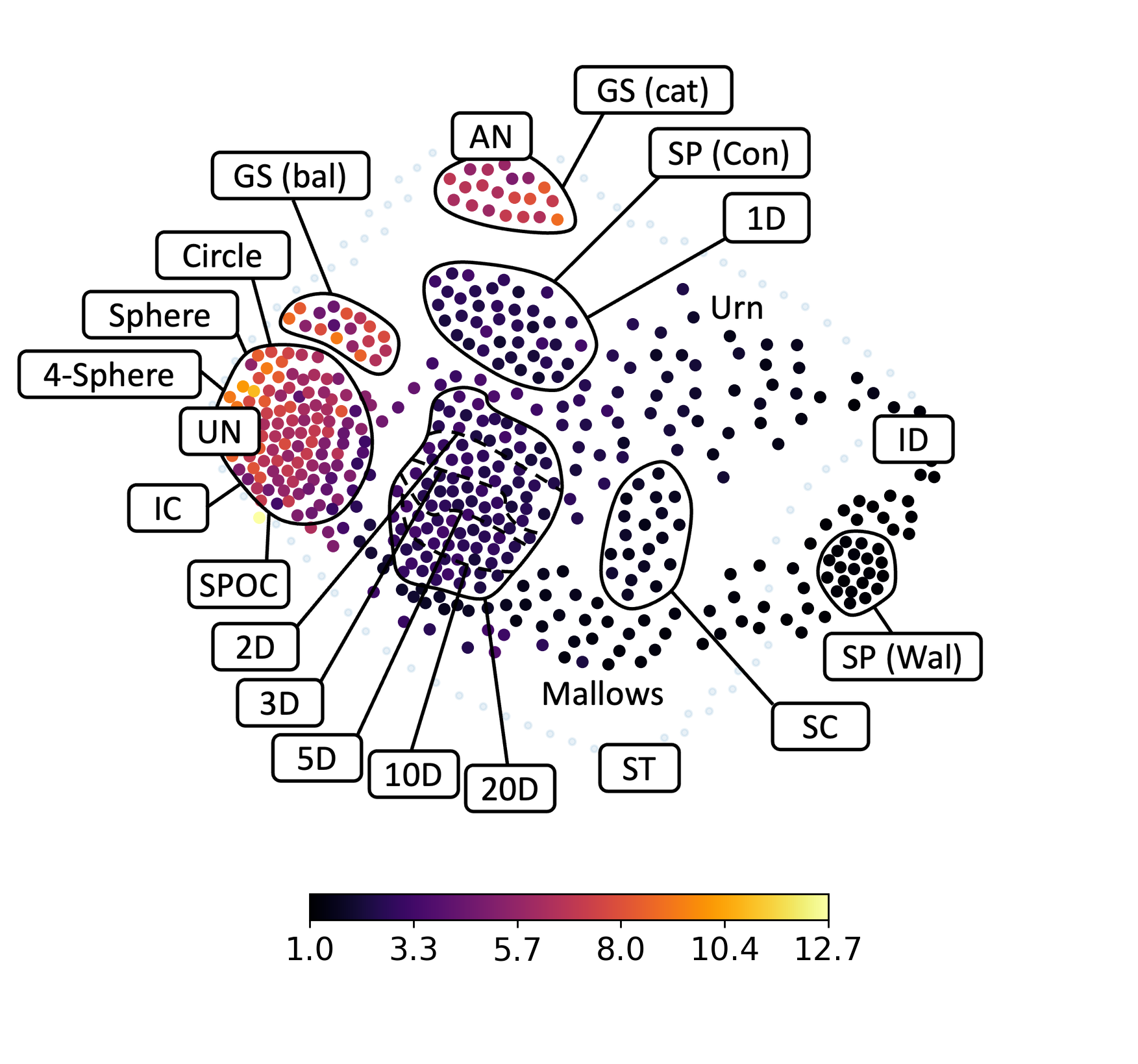}
        \caption{Map of elections coloured by difficulty for S-FCFS} \label{fig:map_by_splitFCFS}
    \end{subfigure}\\
     \centered{last}&\begin{subfigure}{0.360\textwidth}
        \includegraphics[width=\textwidth]{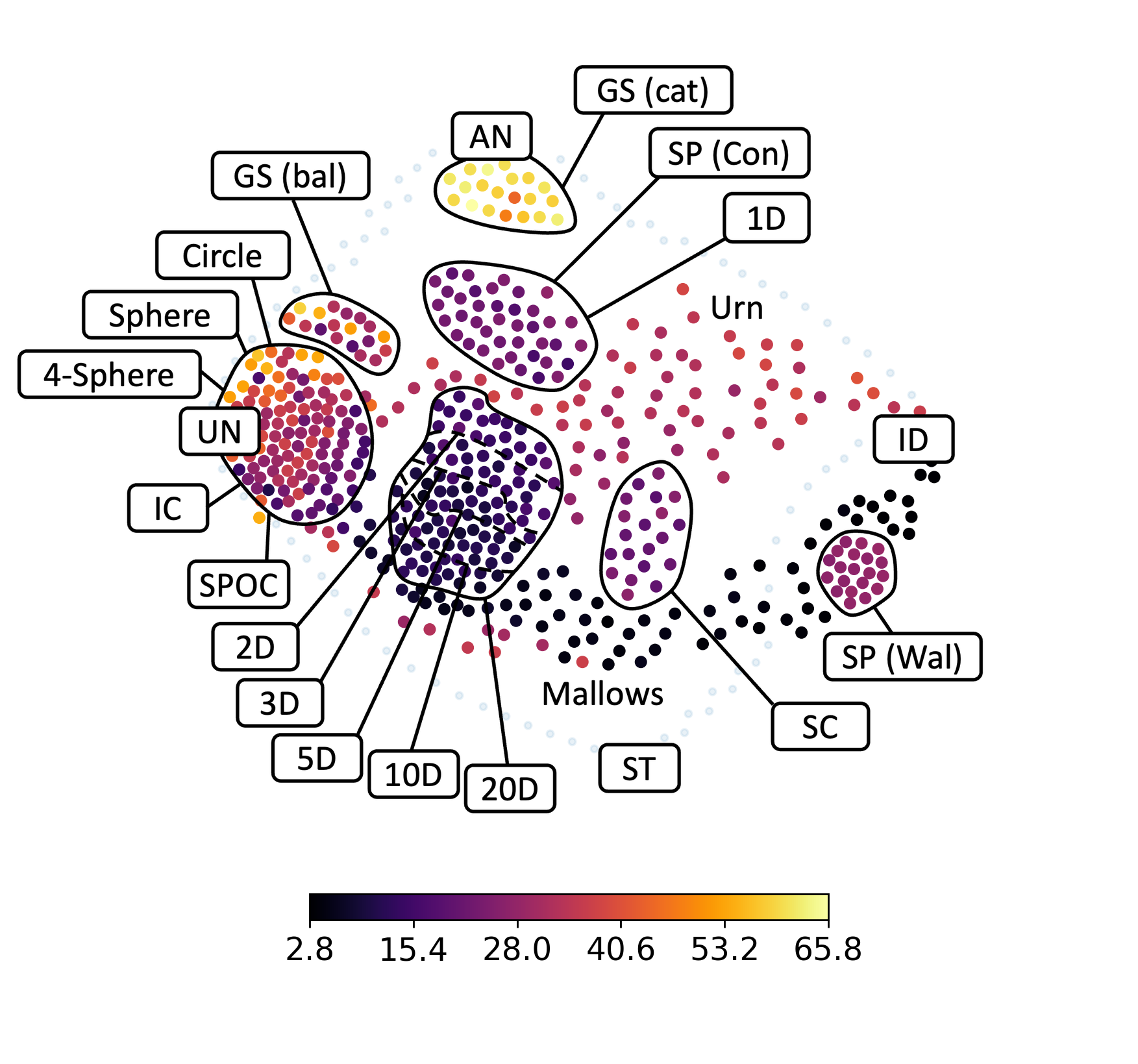}
        \caption{Map of elections coloured by difficulty for L-EQ} \label{fig:map_by_lasteq}
    \end{subfigure}&
    \begin{subfigure}{0.360\textwidth}
        \includegraphics[width=\textwidth]{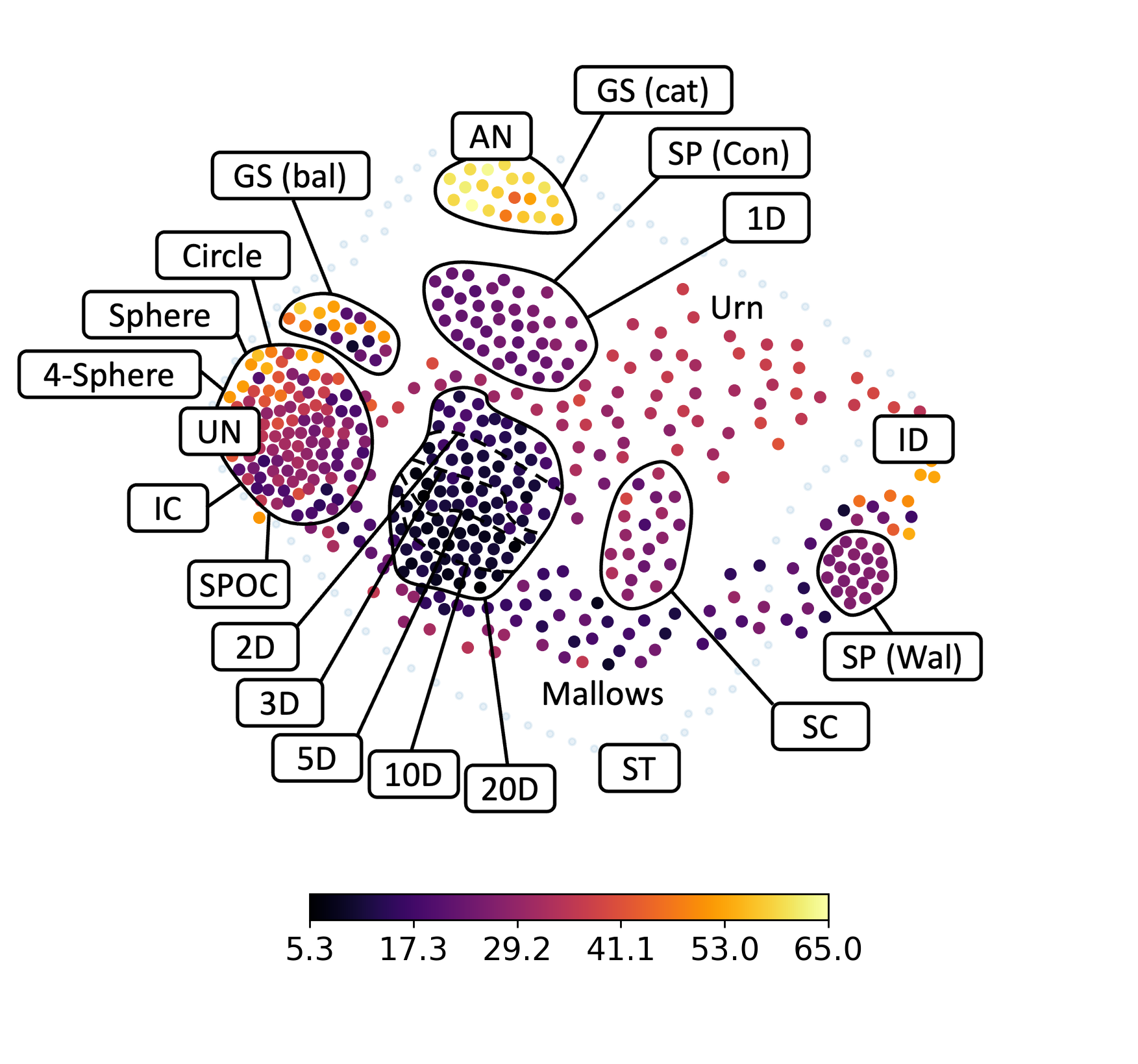}
        \caption{Map of elections coloured by difficulty for L-FCFS} \label{fig:map_by_lastFCFS}
    \end{subfigure} \\
    \centered{next}&
    \begin{subfigure}{0.360\textwidth}
        \includegraphics[width=\textwidth]{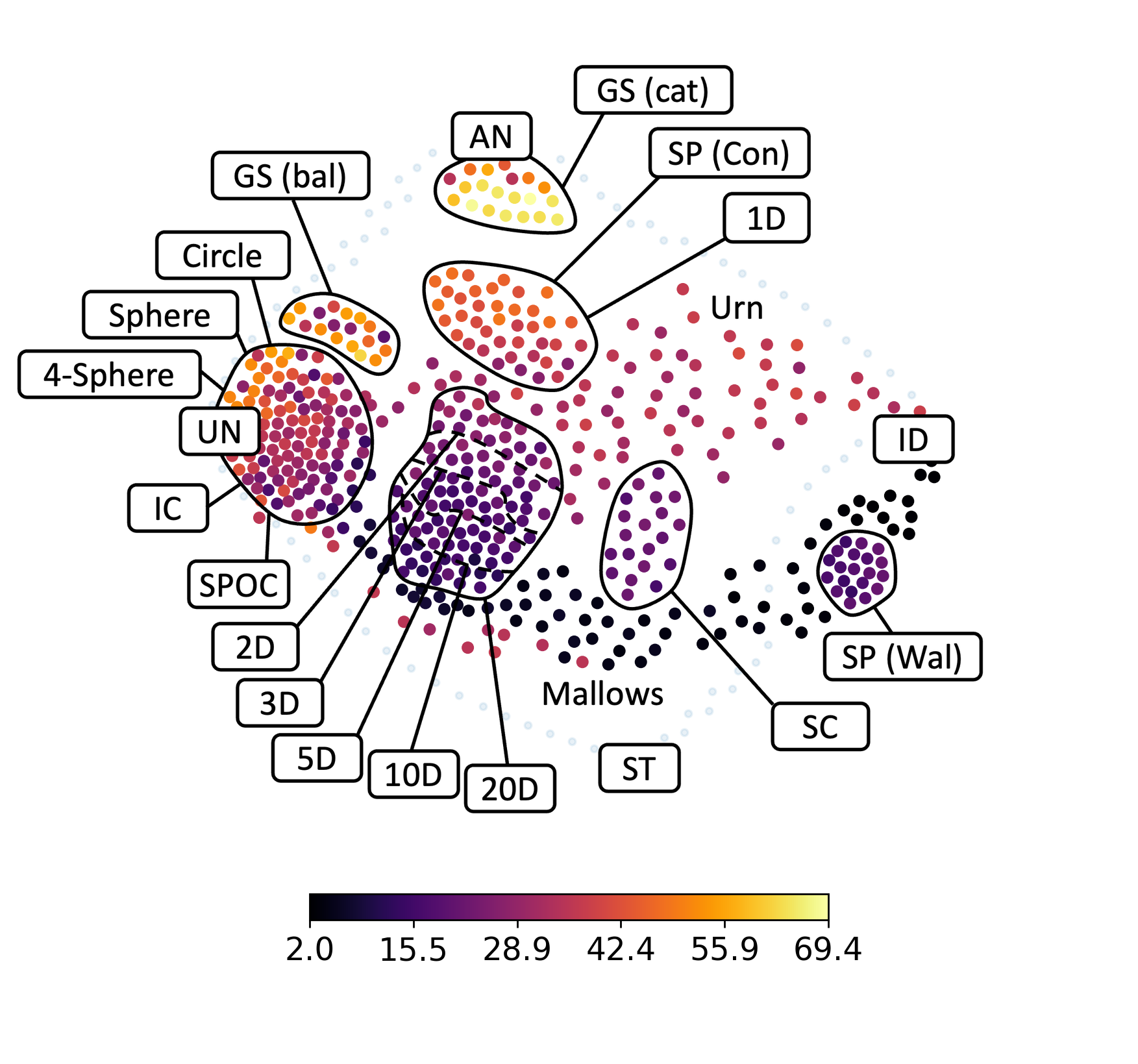}
        \caption{Map of elections coloured by difficulty for N-EQ} \label{fig:map_by_next_eq}
    \end{subfigure}&
    \begin{subfigure}{0.360\textwidth}
        \includegraphics[width=\textwidth]{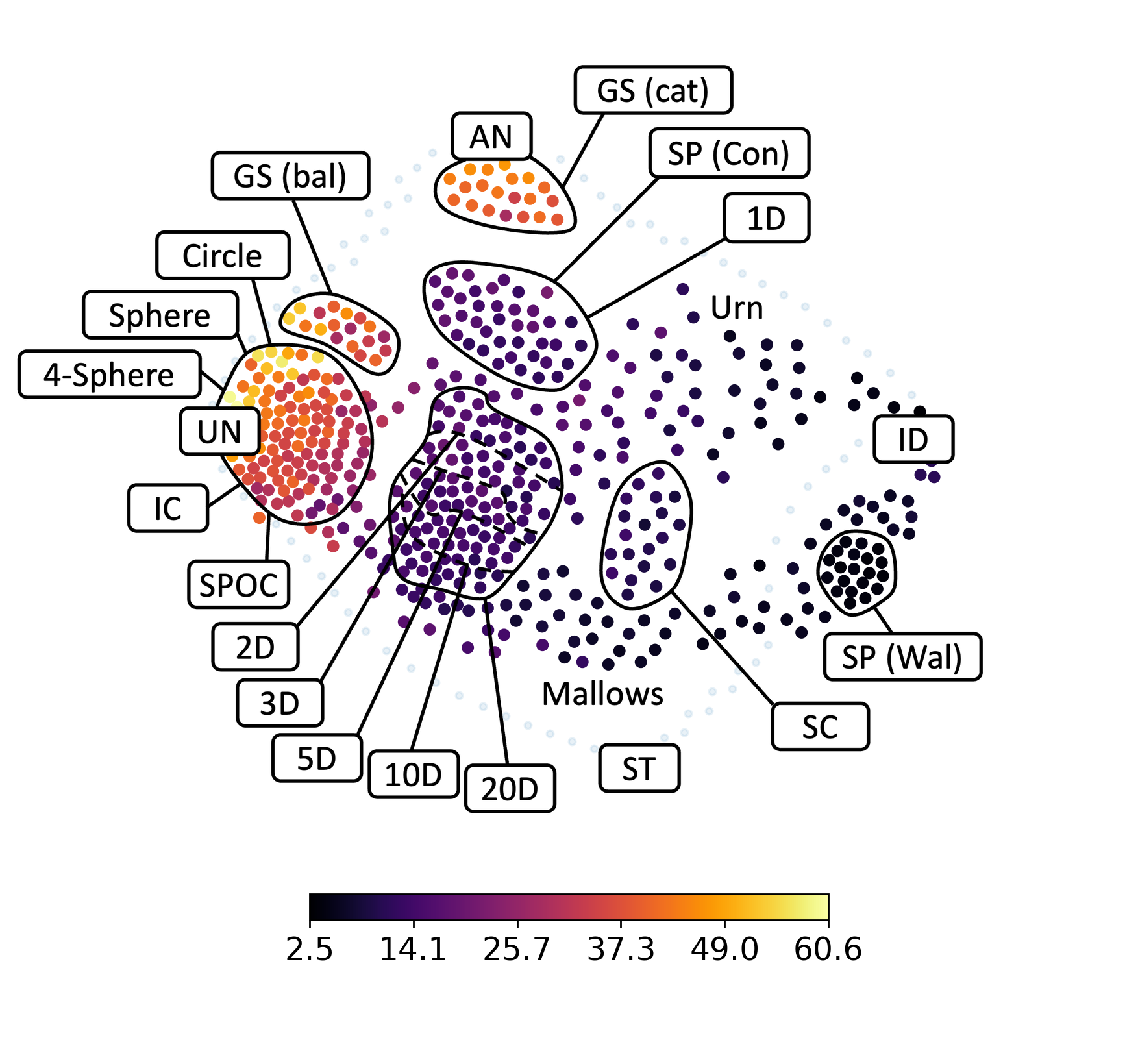}
        \caption{Map of elections coloured by difficulty for N-FCFS} \label{fig:map_by_next_FCFS}
    \end{subfigure} \\
    \centered{next and last}&
    \begin{subfigure}{0.360\textwidth}
        \includegraphics[width=\textwidth]{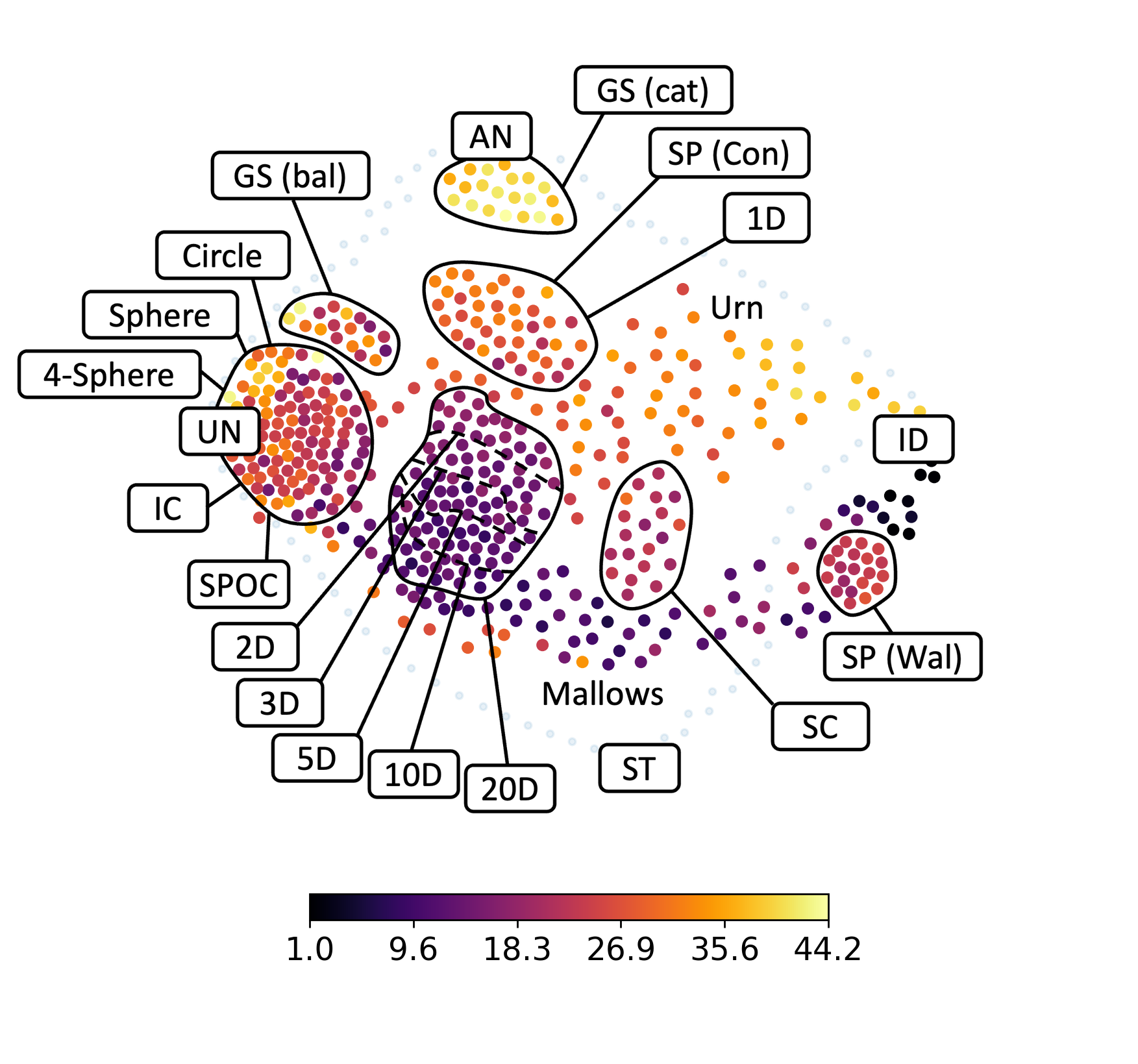}
        \caption{Map of elections coloured by difficulty for NL-EQ} \label{fig:map_by_next_lasteq}
    \end{subfigure} &
    \begin{subfigure}{0.360\textwidth}
        \includegraphics[width=\textwidth]{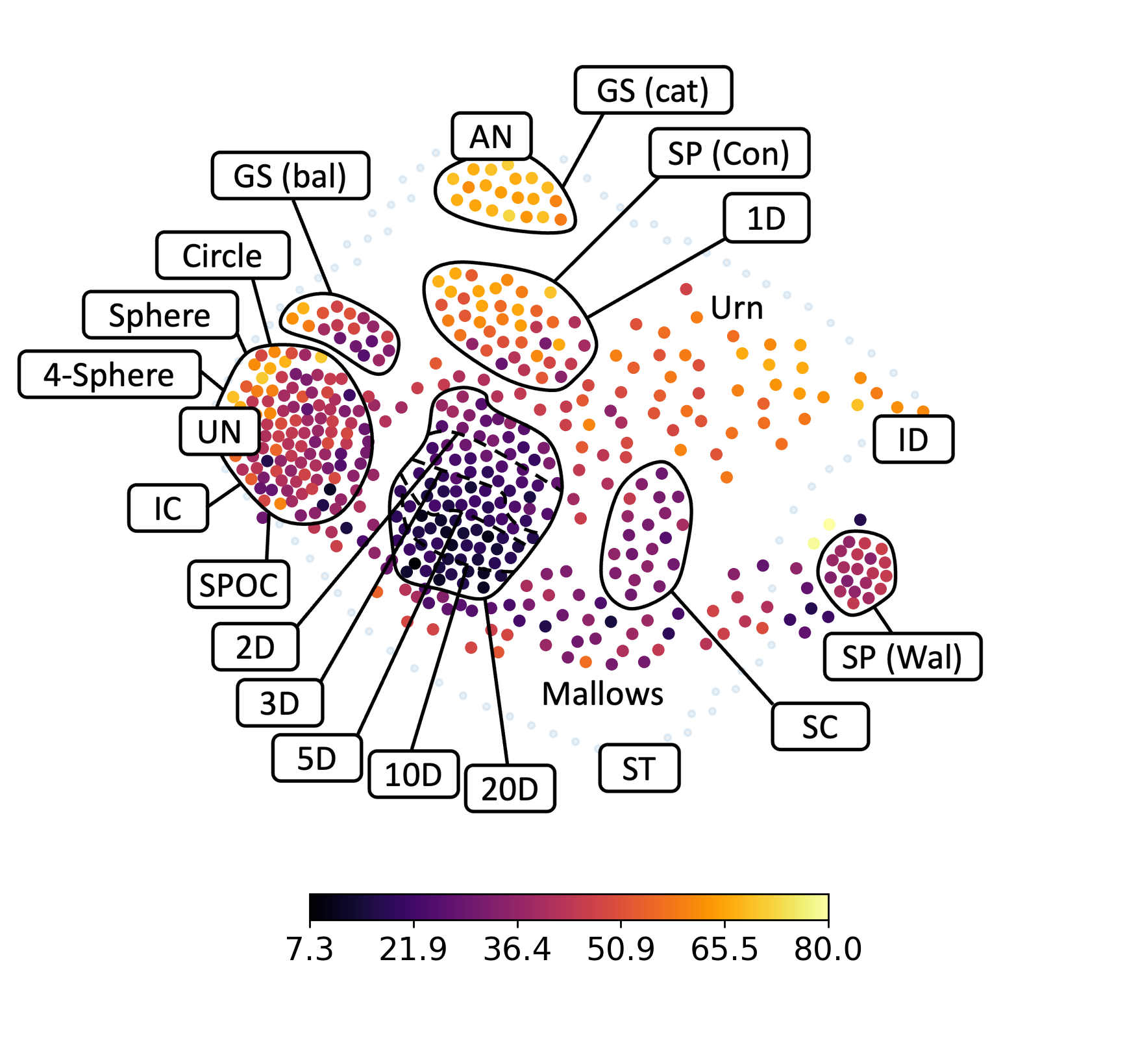}
        \caption{Map of elections coloured by difficulty for NL-FCFS} \label{fig:map_by_next_lastFCFS}
    \end{subfigure} 

\end{longfigure}

\subsection{Impact of the Budget Distribution}

First, we can see that in the IC and two-dimensional cases shown in \Cref{fig:IC-results,fig:Euclidean-results} the budget distribution is asymptotically negligible, though for the most part, splitting equally seems to improve the performance of the various strategies.

Second, we can see in the culture maps in \Cref{fig:fig_table} that in all the querying strategies tested, the FCFS variant performs better closer to ID and worse in elections closer to UN, though in some cases the difference is negligible. This makes sense as the less variation there is in public opinion, the more we gain by probing a single person thoroughly.

\subsection{Difficulty of Elections Closer to UN and AN}

The elections closer to UN and AN proved to be a greater challenge for every rule tested. This could indicate an inherent hardness for said elections compared to elections closer to ST and ID, or it could indicate that the rules tested here are not comprehensive enough. Either way, this indicates the need for further research on the topic.

\section{Conclusion}

We began this paper by asking how effectively we can query a community with limited resources to estimate the consensus of their preferences. 
Our findings suggest that, for both chosen cost functions, the best strategy is to use binary search questions to split the candidates into two groups --- one preferred by the voter over the other --- while allocating the budget equally among all the voters.

These findings are particularly relevant for large-scale decentralised governance settings, such as blockchain-based treasury systems (e.g., Cardano's Voltaire framework), where participants face severe attention constraints yet collective decisions must still reflect community preferences. Our results suggest that even simple binary-split queries, distributed equally across participants, can yield high-quality committees at a fraction of the elicitation cost required by full preference collection.

Furthermore, we observed that the statistical culture used to generate the voter preferences and the cost function used to evaluate queries had a substantial influence on the optimal strategy. 
In particular, in the case of our human-tailored cost function (\Cref{func:perfect}), when the preferences were closer to a uniform or antagonistic distribution (as in the uniform experiments in \Cref{fig:IC-results} and in elections close to UN and AN in \Cref{fig:fig_table}), it was advantageous to collect relatively coarse information regarding each voter's preferences, probably because the election results are more sensitive to small changes in the preferences of each voter as a clear consensus does not exist.
In contrast, in elections whose preference profiles were closer to an identity or a stratification distribution (elections close to ID and ST in \Cref{fig:fig_table}), it was more beneficial to acquire highly detailed information about the preferences of a single voter.
A likely explanation is that, in such elections, obtaining more detailed information about an individual voter is particularly valuable because this voter is representative of many other voters.
Furthermore, this outcome may depend on the specific multiwinner voting rule employed and the target committee size. For instance, a rule designed to shortlist a diverse set of candidates may yield superior performance in uniformly distributed elections, where the importance of broad consensus is comparatively diminished.

Additionally, elections closer to a uniform distribution appear to have an inherent ``hardness'' to them: regardless of the strategy chosen, they consistently yielded the worst scores. A likely explanation is that more dispersed voter profiles make consensus harder to reach, leaving the election results more sensitive to the details of each individual voter's preferences.

The results presented here establish a foundation for a broader research agenda.
The space of possible query-based voting rules is vast, and the rules and cost functions we considered represent a first step into that space.
Future work could explore more efficient querying rules that make use of online algorithms that adapt the query type and the budget usage dynamically --- allowing the use of multiple query types and potentially inferring voter preferences from societal trends. Additionally, more work could be done on improving the accuracy of the cost functions -- by proposing new axioms and conducting experimental research into the cognitive burdens imposed by queries.

\backmatter



\bmhead{Acknowledgements}
Author Talmon acknowledges support by the European Union under the Horizon Europe project \href{https://perycles-project.eu/}{Perycles} (Participatory Democracy that Scales). Views and opinions expressed are however those of the author(s) only and do not necessarily reflect those of the European Union or the European Research Executive Agency. Neither the European Union nor the granting authority can be held responsible for them.
\smallskip
\begin{center}
\includegraphics[width=0.5\textwidth]{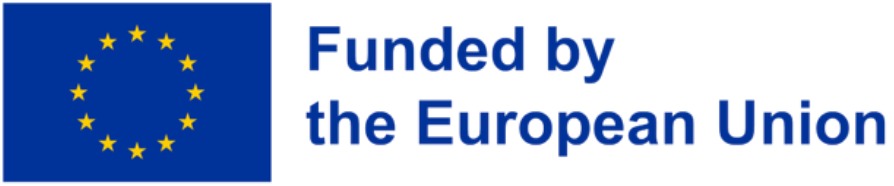}
\end{center}

The authors also gratefully acknowledge the support of the Alpha Program and The Future Scientists Center for the Advancement of the Talented and Gifted, funded by the Maimonides Fund, whose support was instrumental in making this paper possible.

\section*{Declarations}

\subsection*{Competing Interests}
The authors declare that they have no competing interests.

\subsection*{Funding Information}
This work was funded partially by the European Union under the PERYCLES project (grant agreement No.\ 101177658).

\subsection*{Author Contribution}
All authors contributed equally to all aspects of this work, including conceptualization, methodology, analysis, and writing.

\subsection*{Data Availability Statement}
Data and code are available from the corresponding author upon reasonable request.

\subsection*{Research Involving Human and Animals}
Not Applicable.

\subsection*{Informed Consent}
Not Applicable.

\begin{appendices}



\end{appendices}
\bibliography{sn-bibliography}
\end{document}